\documentclass[11pt]{article}

\usepackage[margin=1in]{geometry}
\usepackage[T1]{fontenc}
\usepackage{lmodern}
\usepackage{amsmath,amssymb,amsthm,mathrsfs}
\usepackage{graphicx}
\usepackage{hyperref}
\usepackage[numbers]{natbib}
\usepackage{tikz}

\title{An autocomplete model for elections}
\author{Moon Duchin\\University of Chicago\\Chicago, IL, USA\\\texttt{mduchin@uchicago.edu}
\and Kristopher Tapp\\Saint Joseph's University\\Philadelphia, PA, USA\\\texttt{ktapp@sju.edu}}
\date{}

\newcommand{\paperkeywords}{Markov models, social choice, rankings}

\usepackage{enumitem,tikz-cd,xspace,placeins,nicefrac,latexcolors,colortbl,cleveref}
\usetikzlibrary{arrows.meta}
\newtheorem{theorem}{Theorem}

\newtheorem{proposition}[theorem]{Proposition}

\newtheorem{corollary}[theorem]{Corollary}

\newtheorem{definition}[theorem]{Definition}
\newtheorem{example}[theorem]{Example}
\newcommand{\restatedpropname}{}
\newtheorem*{restatedpropinner}{\restatedpropname}

\newcommand{\restatedcorname}{}
\newtheorem*{restatedcorinner}{\restatedcorname}

\theoremstyle{plain}
\newtheorem{remark}{Remark}

\newcommand{\M}{\mathcal{M}}
\newcommand{\B}{\mathcal{B}}
\newcommand{\C}{\mathcal{C}}
\newcommand{\A}{\mathcal{A}}
\newcommand{\tP}{\tilde{P}}
\newcommand{\tS}{\tilde{\sigma}}

\newcommand{\cyc}{\mathop{cycl}}
\renewcommand{\P}{\mathbb{P}}

\newcommand{\SARW}{\textsc{SARW}\xspace}
\newcommand{\LERW}{\textsc{LERW}\xspace}
\newcommand{\REJ}{\textsc{Reject}\xspace}
\newcommand{\KeepFirst}{\textsc{KeepFirst}\xspace}
\newcommand{\MaxEntropy}{\textsc{MaxEntropy}\xspace}
\newcommand{\KeepSmart}{\textsc{KeepSmart}\xspace}
\newcommand{\KeepRandom}{\textsc{KeepRandom}\xspace}
\newcommand{\Terminate}{\textsc{Terminate}\xspace}

\newcommand{\dwass}{d_{\hbox{\scriptsize Wass}}}
\newcommand{\dd}{D}

\begin{document}


\pagestyle{plain}


\maketitle

\begin{abstract}
We consider two linked problems for ranked choice elections: building a compact summary of a preference profile $P$, and generating synthetic profiles that resemble $P$ in key ways. We introduce a Markov model based on which candidate names follow which on observed ballots and record this in a {\em succession matrix} on candidates together with a start/stop symbol. This provides the basis for new ``autocomplete'' methods for generating synthetic profiles, with ballot length that can be tuned to a target. We compare succession-based generation to parallel use of summaries built from positional frequencies and head-to-head comparisons. To execute this comparison, we introduce new descriptive statistics of profiles that allow new tests of whether synthetic profiles are realistic. In experiments on Scottish ranked-choice profiles, succession-based generation captures held-out ballot structure well, performs especially strongly when completing partial ballots, and often preserves election outcomes, outperforming the alternatives on all counts. Finally, we study the dynamics of iterating the train-and-generate process.
\end{abstract}

\noindent\textbf{Keywords:} \paperkeywords 


\section{Introduction}

In a ranked choice election, a {\em preference profile} $P$ is a record of the ballots, i.e., the (complete and partial) rankings by voters of candidates.  A system of election can then be conceived as a voting rule $f$ that carries $P$ to an outcome, whether a single winner, a winner set, or an output ranking. 

We consider two closely related problems:  
\begin{itemize}
    \item Building a compact summary of $P$; \quad and 
    \item Generating  profiles that are similar to $P$ in salient ways.
\end{itemize}
The two are linked: a summary can be judged by how well the profiles reconstructed from it resemble $P$.

The generation of synthetic preferences through probabilistic models, sometimes called {\em statistical cultures}, has received considerable attention in computational social choice theory,
though classically the roster of generative models is quite limited.  
The great bulk of examples found in the literature employ impartial culture (IC), impartial anonymous culture (IAC), Mallows and Urn models, spatial models, and a simple bootstrap (resampling ballots from an observed election).  
More recently, some authors have advocated for the use of models with a lineage in statistical ranking literature, such as Plackett-Luce and Bradley-Terry models, which will be further discussed below.
Except for the bootstrap, all of these options generally output complete rankings, although extensions to incomplete rankings have been considered.  IC and IAC are unparametrized; for the rest, the only connection to a given $P$ is through a handful of learned parameters, such as candidate utilities.

Rather than posit latent parameters governing an underlying preference distribution, we directly train a simple Markov model from empirical succession frequencies, summarizing a profile $P$ in a form that supports robust and flexible reconstruction.  We introduce a {\em succession matrix}, which records how frequently each candidate immediately follows each other candidate on voters' ballots.  This matrix has $m^2+m$ edge weights for $m$ candidates and defines a random walk whose sample paths can be transformed into synthetic ballots of varying lengths.  
We introduce several methods for converting sample paths of the random walk into valid ballots, creating synthetic profiles from a compact model trained on $P$ by simple counting.

We compare our model to parallel approaches based on two other kinds of data from $P$:  a {\em head-to-head} matrix (based on pairwise comparison of candidates) and a {\em position} matrix (based on the frequency with which candidate $i$ is ranked in position $j$).  With an appropriate formulation, these are also means of extracting $O(m^2)$ parameters that admit analogous forms of sequential ballot generation.  We show that---somewhat surprisingly considering the field's emphasis on tournaments and positional scoring---succession data outperforms both head-to-head and positional data across many tests, including reproducing held-out ballot types, completing partial ballots, and preserving STV winner sets.

Completing partial ballots is a particularly practical application: ballot length is capped in many jurisdictions, making it impossible for voters to record complete preferences, and our model can build counterfactual complete ballots based on a profile of partial ballots.

\subsection{Related work}
Some of the practical motivation for this work comes from the experimental side of computational social choice. Numerical studies of voting rules need profiles as input, and the recent {\em Guide to Numerical Experiments on Elections} \citep{boehmer2024guide} documents how heavily the field relies on the small roster of statistical cultures described above, which mainly produce complete rankings from a handful of parameters but are not calibrated to real-world elections. Indeed, Faliszewski et al.~\cite{faliszewski2023diversity} even observe that the field lacks descriptive statistics of profiles, which calibration would require.\footnote{The authors note that existing indices collapse, ``somewhat disappointingly,'' to a single kind, measuring the average margin of victory.} This paper contributes new standards for realism (such as quantifying coverage, novelty, and the shape of the ``long tail'' of rare ballots) and evaluates methods against them. The ability to extend partial ballots is a second application: methods for extending ballots will immediately make real elections usable with algorithms that assume complete ballots.\footnote{Previous work sometimes simply restricts to complete ballots, but those make up a small share of votes even when full rankings are allowed (about 11.6\% of cast ballots in our primary dataset).}

There is a rich literature on probability distributions on rankings. The Plackett--Luce model~\cite{Luce_1959, Plackett_1975} uses candidate weights for individual draws: the ballots are built via draws without replacement with respect to those weights. The Mallows model~\cite{Mallows_1957} assigns each permutation a probability depending on its distance (usually via Kendall tau) to a central or reference permutation. Bradley--Terry models~\cite{Bradley_Terry_1952,Mallows_1957} are paired-comparison models in which all pairwise probabilities come from a single vector of candidate weights.\footnote{Overviews of ranking models can be found in~\cite{Marden_1995,Critchlow_Fligner_Verducci_1991}, which categorize them as paired-comparison, distance-based~\cite{Fligner_Verducci_1986}, multi-stage~\cite{Fligner_Verducci_1988}, and order-statistics models. Bradley--Terry is the special case of the general paired-comparison model of Babington Smith~\cite{babingtonsmith1950discussion}, which generates complete rankings by making all pairwise comparisons independently and rejecting inconsistent outcomes; Davidson~\cite{Davidson_1970} extended Bradley--Terry to accommodate ties, an extension incorporated into the Babington--Smith model in~\cite{Sawadogo_Dossou-Gbete_Lafon_2017}.}

Many statistics authors have fit Plackett--Luce and Mallows models to ranking data~\cite{Lu_Boutilier_2014,Beckett_1993,Busse_Orbanz_Buhmann_2007,Liu_et_al_2019, Hajek_Oh_Xu_2014}, sometimes on election data, notably the 1980 American Psychological Association presidential election~\cite{diaconis1989spectral,Marden_1995} and the 2002 Irish constituency contests~\cite{gormley,gormley2008exploring,gormley2014mixed,Lu_Boutilier_2014,nguyen2023plackettluce}. In all of this work, the fitted distribution lives on complete permutations, so a partial ballot is treated as a censored observation. Ballot length must be supplied from outside in order to output short ballots, and then is typically used independently of the ranking \cite{hoffman2025truncation}.
Awadelkarim and Ugander  persuasively argue that this discards information and distorts downstream simulation, motivating them to fit Plackett–Luce models augmented with a stop token that, they find, needs fitted weights at each rank position \cite{awadelkarim2024topk}.

Standard applications of the ranking models reviewed above typically require fitting model parameters to the observed data; the learned parameters do not necessarily have intuitive interpretations, and the learning or sampling is sometimes computationally costly.\footnote{For example, it is expensive to learn the reference permutations of a mixed Mallows model, though sampling is then fast via the ``repeated insertion method'' of~\cite{Doignon_Pekec_Regenwetter_2004}; a naive application of the Babington--Smith model would cheaply learn head-to-head data from the profile, but then sample via rejection, which becomes intractably slow as the number of candidates grows.}

Our approach is to use a profile $P$ to learn behavior for the candidates together with a start/stop symbol $\star$, from empirical frequencies without fitting utilities (i.e., non-parametrically). The main autocomplete family of models simply bases stopping on successions $i\star$, treated just like other successions $ij$.\footnote{The family includes a maximum-entropy method of generating a synthetic profile, the rest being random-walk methods; the max-entropy construction applies the foundational work of Jaynes from the 1950s~\cite{jaynes1957information}.} 
In parallel, a positional model considers the frequency of observing a stop $\star$ at rank $i$, and a head-to-head model makes use of the pairwise standing of $\star$ with other candidates (considering ranked candidates as above $\star$ and unranked as below). 
This gives a unified treatment to the handling of ballot length and offers novel insights about stopping conditions---notably that reconstituting random walk output into a valid ballot systematically shortens it, which we quantify and then correct by tuning (\S\ref{sec:properties}-\ref{sec:tuning-length}). Even without length tuning, the succession-based model is the most successful by several measures below.

This compares interestingly to a
Markov chain approach from the computer science literature on rank aggregation:  Dwork et al.~\cite{dwork2001rank} build chains whose transitions go from candidates to those ranked higher, not necessarily successive, and they read a consensus order for the preference profile off the stationary distribution of their chains.  For our chains, the stationary distribution is easily characterized (\Cref{prop:steady_state} says that stationary probability is just the frequency of mentions), and the content lies in the transitions, not the steady state.

\subsection{Contributions}

Our primary contributions are as follows.
\begin{enumerate}
    \item We introduce the succession matrix $M$ as a compact summary of a profile $P$, alongside a family of ``autocomplete'' methods to generate new profiles from $M$ via random walks on candidate names (\S\ref{sec:succession}). 
    We contrast this with the use of similar matrices based on positional or head-to-head summary data (\S\ref{sec:comparing}). 
    A start/stop symbol $\star$ gives a uniform treatment of ballot length across the three, and allows users to tune the length of the  outputs (Theorem~\ref{thm:stopping-weights}, Proposition~\ref{P:target_length}).

\item We propose new descriptive statistics of profiles and new standards for assessing whether synthetic profiles are realistic. On the descriptive side, these include stopping weights (for each candidate, the share of ballots ranking them that end with them; Definition~\ref{def:stopping}), a cyclicity measure that vanishes exactly on ordered profiles (Definition~\ref{D:cyclicity}), and the frequency spectrum of ballot types, which captures the shape of the ``long tail'' of rare ballots (Table~\ref{tab:spectrum-pentland}). For comparing generated profiles to real ones, we measure coverage of held-out ballot types and novelty with respect to training data (Figure~\ref{fig:train-test}).
    
    \item Trained on profiles from hundreds of Scottish local government contests, autocomplete recovers held-out ballot types about as well as resampling ballots (bootstrapping) from each $P$ itself, while producing novelty rates close to those of the true held-out data. It is well calibrated on the long tail of ballot types and rarely ``hallucinates'' (\S\ref{subsec:fidelity}).
    \item Trained only on partial ballots, autocomplete has the strongest performance among our compared methods on completing to full ballots; it is the only one that beats a benchmark that samples a short-ballot prefix and completes it uniformly at random (\S\ref{subsec:completion}).
    \item We often want to know how sensitive algorithmic output is to perturbations of the input---in this case, the ranking profile. Generated profiles provide realistic perturbations of $P$; as a case study, we show that succession-based generation yields the original STV winner set more often than positional or head-to-head generation (\S\ref{subsec:stv}).
\end{enumerate}
Finally, we consider what would happen if we iterate the train-and-generate process repeatedly.  We show that the limit points of this process are acyclic, which induces a partial order on candidates (Theorem~\ref{thm:iteration}).  We characterize the limit exactly for two candidates (Proposition~\ref{prop:twocand}) and give a general heuristic.

Replication code for this work is available at \url{github.com/mggg/autocomplete-replication}.

\section{Preliminaries}\label{sec:prelims}

\paragraph{Profiles}
Let $\C=\{A_1,\dots,A_m\}$ be a set of $m$ candidates. A {\em ballot} $\sigma$ is a ranking of some or all of the candidates; it is {\em complete} if it ranks all $m$ and {\em partial} (or top-truncated) otherwise, and its {\em length} is the number of candidates it ranks. We write $i\in\sigma$ to mean that candidate $i$ is ranked on $\sigma$. A ballot of length one is a {\em bullet vote}. A preference profile $P$ is a multiset of ballots, and a voting rule $f$ carries $P$ to an outcome, as described in the introduction. We write $\mu_i$ for the portion of ballots in $P$ that rank candidate $i$ (its frequency of {\em mentions}) and $L$ for the average ballot length in $P$. A profile is {\em ordered} if there is a (possibly not unique) total order of the candidates so that every ballot respects that order.

\paragraph{Weighted graphs}
For a weighted directed graph whose vertices include the candidates, a {\em candidate cycle} is a positive-weight directed cycle passing only through candidates. The graph, or its weighted adjacency matrix, is {\em acyclic} if it has no candidate cycle.

\paragraph{Markov chains}
Recall that a Markov chain on a finite state space is a random walk on $n$ states that is specified by an $n\times n$ transition matrix $M=(m_{ij})$, where $m_{ij}$ is the probability of moving from state $i$ to state $j$, making $M$ row-stochastic.  A sequence of states drawn under proposals from the chain is a {\em sample path}. A probability (column) vector $\pi$ is {\em stationary} if $M^T\pi=\pi$. The chain is {\em irreducible} if every state can be reached from every other, {\em aperiodic} if the return times to each state are relatively prime, and {\em ergodic} if it is both. By the fundamental theorem of Markov chains, as long as the chain is ergodic, $\pi$ exists, is unique, and is the attracting point for all other probability vectors; see~\cite{levin2017markov}.

\section{Succession data and autocomplete}\label{sec:succession}

For a profile $P$ based on $m$ candidates, we build a state space $\Omega=\Omega(P)$ as a weighted directed graph on $m+1$ nodes: one node for each member of the candidate set $\C=\{A_1,...,A_m\}$, plus an additional ``start/stop'' node that we denote by $\star$.  Let $\C^\star:=\C\cup\{\star\}$ denote the extended candidate set.

The directed edge $(i,j)$ has the weight 
$$m_{ij}=\frac{\#\{ \sigma\in P :  ij\in \sigma\}}
{\# \{\sigma\in P: i \in \sigma\}},$$
where $ij\in \sigma$ means that $i$ is ranked immediately above $j$ on the ballot $\sigma$, and $i\in \sigma$ means that $i$ is ranked.
In particular, $\star i,i\star\in \sigma$ mean that $i$ is the first- or last-ranked candidate on the ballot, respectively.
Each pair of distinct nodes has edges both ways, so the state space has $2\cdot\binom{m+1}{2}=m^2+m$ edge weights.
The normalization ensures that every vertex has weighted out-degree $1$; in network science, this is sometimes called the {\em out-strength} of a vertex.

\begin{definition}[Succession and autocomplete]
For a preference profile $P$ on $m$ candidates in which every candidate is ranked by at least one voter, the construction above gives a weighted, directed {\em succession graph} $\Omega(P)$, with weighted adjacency matrix $M=M(P)=(m_{ij})$  called the {\em succession matrix} of $P$.  The family of methods to reconstitute profiles from $M$ is collectively called the {\em autocomplete model}.
\end{definition}

\begin{example}
A 3-candidate example showing $P$, $M$, $\Omega$:

\begin{center}

\begin{minipage}{0.30\textwidth}
\centering
\[
\begin{array}{cccc}
\times 54 & \times 12 & \times 36 & \times 18\\
A & C & C & A \\
C &   & B & B \\
B &   & A & C
\end{array}
\]
\end{minipage}
\hspace{0.08\textwidth}
\begin{minipage}{0.50\textwidth}
\centering
{\small
\[
\bordermatrix{
 & \star & A & B & C \cr
\star & 0 & \nicefrac{3}{5} & 0 & \nicefrac{2}{5} \cr
A & \nicefrac{1}{3} & 0 & \nicefrac{1}{6} & \nicefrac{1}{2} \cr
B & \nicefrac{1}{2} & \nicefrac{1}{3} & 0 & \nicefrac{1}{6} \cr
C & \nicefrac{1}{4} & 0 & \nicefrac{3}{4} & 0 \cr
}
\]
}
\end{minipage}

\vspace{0.8cm}

\begin{tikzpicture}[
    scale=1.15,
    >={Stealth[length=2.2mm]},
    vtx/.style={
        circle,
        fill,
        inner sep=0pt,
        minimum size=4.5pt
    },
    vlab/.style={
        font=\normalsize,
        inner sep=2pt
    },
    wt/.style={
        font=\small,
        inner sep=1pt,
        fill=white
    },
    edge/.style={
        ->,
        semithick,
        shorten >=2pt,
        shorten <=2pt
    },
]

  \node[vtx] (A) at (2,4) {};
  \node[vlab, above] at (2,4.1) {$A$};

  \node[vtx] (B) at (4,0.536) {};
  \node[vlab, below right] at (B) {$B$};

  \node[vtx] (C) at (0,0.536) {};
  \node[vlab, below left] at (C) {$C$};

  \node[vtx] (S) at (2,1.691) {};
  \node[vlab, below] at (S) {\Large $\star$};

  \draw[edge] (S) to[bend left=13]
      node[wt, pos=.5, xshift=-2pt]
      {$\nicefrac{3}{5}$} (A);

  \draw[edge] (A) to[bend left=13]
      node[wt, pos=.5, yshift=-2pt]
      {$\nicefrac{1}{3}$} (S);

  \draw[edge] (S) to[bend left=13]
      node[wt, pos=.5]
      {$\nicefrac{2}{5}$} (C);

  \draw[edge] (C) to[bend left=13]
      node[wt, pos=.5, yshift=2pt]
      {$\nicefrac{1}{4}$} (S);

  \draw[edge] (A) to[bend left=13]
      node[wt, pos=.5]
      {$\nicefrac{1}{6}$} (B);

  \draw[edge] (B) to[bend left=13]
      node[wt, pos=.5]
      {$\nicefrac{1}{3}$} (A);

  \draw[edge] (B) to[bend left=13]
      node[wt, pos=.5]
      {$\nicefrac{1}{6}$} (C);

  \draw[edge] (C) to[bend left=13]
      node[wt, pos=.5]
      {$\nicefrac{3}{4}$} (B);

  \draw[edge] (B) to
      node[wt, pos=.5]
      {$\nicefrac{1}{2}$} (S);

  \draw[edge] (A) to[bend right=8]
      node[wt, pos=.5]
      {$\nicefrac{1}{2}$} (C);

\end{tikzpicture}

\end{center}
\end{example}

$M$ is row-stochastic by construction, and it is the transition matrix of a random walk (Markov chain) on $\Omega$.
We can use $M$ to generate new preference profiles as follows.  
Beginning at $\star$, perform a random walk on $\Omega$ with the weights as transition probabilities.  This produces an infinite sequence of states
$\star, A_{t_1}, A_{t_2}, \dots$
which is a sample path of the random walk.
Up to this point, this closely resembles the idea introduced by Claude Shannon in the early 1950s, using $n$-grams (in this case, 2-grams) to generate plausible strings according to the succession probabilities in a training corpus \cite{Shannon1951}.
By the 1980s this idea had been implemented for predictive text, earning it the name ``autocomplete'' \cite{DarraghWitten1991}.

For our application, we can cut the generated string at instances of $\star$ to obtain {\em raw ballots}, which are strings of candidate names of arbitrary length with possible repetition.  
We will consider several possible methods to adapt the sample path to a collection of ballots, i.e., a synthetic preference profile $\tP$.
\begin{enumerate}
    \item\label{m:reject} (\REJ) Generate raw ballots as above, and keep only the valid ones (rejecting those with duplication).
    \item\label{m:term} (\Terminate) Terminate each ballot the first time a candidate name appears twice.
    \item\label{m:keepfirst} (\KeepFirst) Only keep the first occurrence of each candidate name.\footnote{Several other versions were considered, including  \KeepRandom, which keeps a uniformly random instance of each repeated candidate name, and \KeepSmart, which uses more information about $P$ (like the Borda counts or the head-to-head win/loss/tie portions) to probabilistically decide which of the possible resolutions of a raw ballot to keep. None of these clearly outperforms the simpler \KeepFirst.} 
    \item\label{m:LERW} (\LERW) Processing from left to right, when a second occurrence of any $A_j$ is encountered, delete the entire substring between them.  This is loop-erased random walk on $\Omega$.
\end{enumerate}

For example, if the candidates are $\{A,B,C\}$ and the sample path is 
$$\star A B A C \star A C B \star C B A C B A B \star ...,$$ then the ballots under the processing rules are as follows.
\begin{enumerate}
\item $(ACB)$
\item $(AB), (ACB), (CBA)$
\item $(ABC), (ACB), (CBA)$
\item $(AC), (ACB), (CB)$
\end{enumerate}

Additional alternatives use a self-avoiding random walk or generate ballots from a fitted distribution.
\begin{enumerate}[resume]
    \item\label{m:SARW} (\SARW) Perform a self-avoiding random walk, rejecting any proposed transition that repeats a name already seen on the ballot, ending when $\star$ is reached or when there are no positive-weight edges to unvisited candidates.
    \item\label{m:maxent} (\MaxEntropy) Sample ballots from the maximum entropy distribution with succession probabilities from $M$. (See Appendix~\ref{sec:maxentropy}.)
\end{enumerate}
``Maximum entropy'' roughly means that it can be viewed as the least structured distribution consistent with the given succession probabilities; it introduces no additional relations among ballots beyond what is required to reproduce the succession matrix.   \MaxEntropy provides a tradeoff:  it  outperforms the other methods at generating synthetic profiles close to the training profiles, but it is computationally costly  and becomes intractable even for moderate numbers of candidates, which the other methods can easily handle.

Below, we will rely on a dataset of 1070  Scottish local government elections in which voters ranked 3-14 candidates to elect winners by single transferable vote (STV).  
The election data was originally assembled by David McCune \cite{mccune2024monotonicity} and we use the datasets as published by MGGG \cite{mggg_scot}.
In order to provide the most interpretable comparisons, we will restrict most of the experiments to the number that occurs most often: $m=7$ candidates, which occurs in 282 contests.

\section{Properties}\label{sec:properties}
\subsection{Steady state}
For a succession matrix $M=(m_{ij})$, let $\pi$ be the steady-state vector.  

\begin{remark}[Ergodicity]
As long as every candidate is mentioned in $P$, the Markov chain is irreducible.
As long as there are any ballots of length $1$ and of length $2$ in $P$, the Markov chain is aperiodic.  As recalled above, these conditions constitute ergodicity for the chain and suffice for the existence and uniqueness of $\pi$.
\end{remark}

Recall  that $\mu_i$ is the mentions count for $i$, i.e., the portion of ballots in $P$ that rank candidate $i$, with the convention that $\mu_\star=1$, and  $L$ is the average ballot length in $P$.

\begin{proposition}\label{prop:steady_state}(Description of steady state)
The probability vector $\pi=\frac{\mu}{L+1}$ is stationary for $M$; that is, each candidate's stationary weight is proportional to their frequency of mentions.
\end{proposition}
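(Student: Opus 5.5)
We verify $M^T\pi=\pi$ directly by a flow-counting (double counting) argument, and separately check that $\pi$ is a probability vector. Let $N=|P|$ and write $n_i=\#\{\sigma\in P: i\in\sigma\}$, so that $\mu_i=n_i/N$ for candidates and $n_\star=N$, matching the convention $\mu_\star=1$. Write $s_{ij}=\#\{\sigma\in P: ij\in\sigma\}$ for the raw succession counts, so $m_{ij}=s_{ij}/n_i$. Because a ballot ranks each candidate at most once, each ballot contributes at most one occurrence of the succession $ij$. Hence $s_{ij}$ counts total occurrences of $ij$ across the profile, and likewise $n_i$ counts total occurrences of $i$.

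The key step is to view each ballot $\sigma=(a_1,\dots,a_k)$ as the closed string $\star a_1 a_2\cdots a_k\star$. Its consecutive pairs are $\star a_1, a_1a_2,\dots,a_k\star$. In this string every node $j\in\C^\star$ that appears on $\sigma$ (with $\star$ always counted once) is entered exactly once and exited exactly once. For a candidate $j\in\sigma$, the entry is the pair ending in $j$, which is either $\star j$ (if $j=a_1$) or $a_{t-1}j$. For $\star$, the entry is the pair $a_k\star$.

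Summing over ballots gives the in-flow identity
\[
\sum_{i\in\C^\star} s_{ij} = n_j \qquad\text{for every } j\in\C^\star,
\]
and symmetrically $\sum_j s_{ij}=n_i$, which is just row-stochasticity. We should note the degenerate case of an empty ballot, which would produce the pair $\star\star$. Such ballots either are excluded or contribute $m_{\star\star}$; in the latter case the identity for $j=\star$ still holds, since $m_{\star\star}$ appears in the matrix. I would remark that we assume ballots are nonempty.

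Stationarity then follows by direct computation:
\[
(M^T\mu)_j=\sum_i \mu_i m_{ij}=\sum_i \frac{n_i}{N}\cdot\frac{s_{ij}}{n_i}=\frac{1}{N}\sum_i s_{ij}=\frac{n_j}{N}=\mu_j.
\]
For normalization, $\sum_{i\in\C}\mu_i$ equals the total number of mentions divided by $N$, which is $L$. Adding $\mu_\star=1$ gives total mass $L+1$, so $\pi=\mu/(L+1)$ is a stationary probability vector. Uniqueness, when wanted, comes from the Ergodicity remark and is not needed for the statement.

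The only real obstacle is bookkeeping. We must make sure that ``number of ballots containing $ij$'' coincides with the number of occurrences of $ij$, which uses that ballots have no repeated names. We must also handle $\star$ consistently as both start and stop, so that the closed-string picture gives exact flow conservation at every node, including $\star$.
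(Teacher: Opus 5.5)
Your proof is correct and follows essentially the same route as the paper. Both arguments reduce $M^T\pi=\pi$ to the in-flow identity $\sum_{i} y_{ij}=\mu_j$ (your $\sum_i s_{ij}=n_j$) by cancelling $\mu_i$ against the denominator of $m_{ij}$; you additionally justify that identity via the closed-string picture and check the normalization $\sum_{j\in\C^\star}\mu_j=L+1$, both of which the paper leaves implicit.
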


Thus, $L$ and $\mu$ can be read off of $M$ 
via $L=\nicefrac 1{\pi_\star}-1$ and $\mu_i=\nicefrac{\pi_i}{\pi_\star}$. 
Note that where $m_{ij}$ is the conditional frequency of $ij$ given $i$, we can also write $y_{ij}$ for the absolute frequency of the substring $ij$, so that $m_{ij}=\nicefrac{y_{ij}}{\mu_i}$.

\begin{proof}
We must confirm that $M^T\pi=\pi$, which amounts to verifying that 
$\sum_{j\in\C^\star} m_{ji}\pi_j=\pi_i$ for all $i$.
We compute:
$$\sum_{j\in\C^\star} m_{ji}\pi_j=\sum_{j\in\C^\star} \frac{y_{ji}}{\mu_j}\cdot \frac{\mu_j}{L+1} = \frac{1}{L+1}\sum_{j\in\C^\star} y_{ji} = \frac{1}{L+1}\mu_i=\pi_i.\qedhere$$
\end{proof}

\begin{remark}\label{rem:shrink}
The steady-state weight $\pi_\star$ of the start/stop state $\star$ is just $1/(L+1)$, where $L$ is the average ballot length.  This means that the random walker is at $\star$, on average, every $L+1$ steps, so the expected length of a raw ballot is $L$. 
\end{remark}

\begin{remark}
If all ballots in the profile are complete rankings (i.e., length $m$), then the uniform distribution is stationary.  
\end{remark}

\subsection{Generated ballots and profiles}

Let $P$ be the original profile, $M=(m_{ij})$ its succession matrix.  We can write $\tS$ for the distribution on ballots obtained by $M$ under any of the methods described above.
We can then write $\tP$ for the distribution on random profiles obtained by drawing from $\tS$ to reach the same number of ballots present in $P$.

One simple observation is that, for nearly all methods, the reconstitution process preserves first-place vote frequencies from $P$.  (The exception is \REJ, where ballots with certain first-place choices might be rejected at different rates.) However, other key properties of $P$ that are preserved by raw ballots---like the last-place vote frequencies or the expected ballot length, or even the relative succession probabilities---will in general be altered in the processing from raw to valid ballots.  

\begin{definition}[Stopping weight, circulation]\label{def:stopping}
The {\em stopping weight} of candidate $i$ is $T_i=m_{i\star}$; out of ballots that mention $i$, this is the share of ballots in $P$ that terminate with $i$.
A profile $P$ is called a {\em full circulation profile} if every candidate is sometimes first, sometimes last, and is part of a candidate cycle in $\Omega(P)$.
\end{definition}

It is straightforward to verify that $P$ is ordered if and only if $M$ is acyclic; furthermore, if $M$ is acyclic, then all ballot generation methods are the same, and $\tP$ is ordered with probability $1$.

To measure how far a profile is from being ordered, we use the following statistic.

\begin{definition}\label{D:cyclicity}
For $0<\alpha<1$, the {\em $\alpha$-cyclicity} of the succession matrix $M$ is
\[
\cyc(M,\alpha) = -\log\det(I-\alpha\widehat M) = \sum_{k\geq 1} \frac{\alpha^k}{k}\,\mathrm{trace}(\widehat M^k),
\]
where the $m\times m$ matrix $\widehat M$ denotes the candidate portion of $M$ obtained by removing the row and column of $\star$.
\end{definition}

The $i$th diagonal entry of $\widehat M^k$ measures the probability that a walk starting at $A_i$ returns to $A_i$ in exactly $k$ steps.  Consequently, $\mathrm{trace}(\widehat M^k)$ is the total weight of all loops (closed walks) of length exactly $k$.  The parameter $\alpha$ functions to discount longer loops.  For any fixed value of $\alpha$, $M$ is acyclic if and only if $\cyc(M,\alpha)=0$, so $\cyc(M(P),\alpha)$ vanishes exactly when $P$ is ordered.

\begin{remark}
Full circulation is an extremely mild condition on real-world elections.  For example, all 1070 Scottish local government elections have this property.  In fact, much stronger properties account for most of this:  there are 7786 candidates across all 1070 elections, and 7767 of them (99.76\%) received bullet votes, ensuring both a first- and a last-place ranking.  Furthermore, there are 25,985 pairs of same-contest candidates  in that dataset, and of those, 25,983 form candidate 2-cycles, with both $ij$ and $ji$ appearing successively on some voters' ballots.  
\end{remark}

\begin{theorem}[Ballot statistics]\label{thm:stopping-weights}
If $P$ is a full circulation profile, then the ballot-length and stopping-weight statistics for generated ballots $\tS$ behave as described in Table~\ref{tab:ballot-stats}, and the inequalities are strict.
\end{theorem}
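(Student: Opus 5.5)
The plan is to go through Table~\ref{tab:ballot-stats} one method at a time. Each generated ballot will be compared with the raw ballot (the excursion of the walk from $\star$ back to $\star$) that it is built from. By Remark~\ref{rem:shrink}, which follows from Proposition~\ref{prop:steady_state} and Kac's return-time formula $\mathbb{E}[\text{return time to }\star]=1/\pi_\star=L+1$, a raw ballot has expected length exactly $L$. Also, the last entry of a raw ballot is $i$ with the same frequencies as in $P$, so raw ballots reproduce the stopping weights $T_i=m_{i\star}$. Every entry of the table should then come from a pathwise comparison with the raw ballot, plus an argument that the comparison is strict with positive probability. The full-circulation hypothesis (every candidate is sometimes first, sometimes last, and lies on a candidate cycle) is what supplies that positive probability.

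\emph{Length for the processing methods.} For \Terminate, \KeepFirst and \LERW, the processed ballot is obtained from the raw one by deleting entries, so its length is at most the raw length, pathwise. Taking expectations gives expected length $\le L$. For strictness, I will exhibit a positive-probability raw ballot containing a repeat. Since $i$ is sometimes first, $m_{\star i}>0$. Since $i$ lies on a candidate cycle $i\to j_1\to\cdots\to i$, that cycle has positive weight. Since $i$ is sometimes last, $m_{i\star}>0$. The path $\star\, i\, j_1\cdots i\,\star$ therefore has positive probability, and each of these rules strictly shortens it.

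\emph{Length for \SARW and \REJ.} For \SARW, the chance of stopping at $i$ given visited set $V$ is $m_{i\star}/(m_{i\star}+\sum_{j\notin V}m_{ij})\ge m_{i\star}$. I will couple \SARW with the raw walk so that it stops no later than a raw walk whose continuation probabilities are reduced at each step. The length then follows by stochastic domination, with strictness again coming from a cycle forcing a proper reduction. For \REJ, I will write the expected length of accepted ballots as $\mathbb{E}[\ell\,\mathbf 1_{\text{valid}}]/\P(\text{valid})$. The key fact is that validity is a decreasing event in length: every raw ballot of length $>m$ is invalid, and conditioning on a prefix, extending can only create repeats. The plan is to show that $\ell$ and $\mathbf 1_{\text{valid}}$ are negatively correlated (by an FKG-type or direct monotone-coupling argument along the excursion). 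That gives accepted length $<L$, with strictness from the same cycle path.

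\emph{Stopping weights.} For each candidate $i$, I will compare the generated fraction of ballots mentioning $i$ that end at $i$ with $T_i$. In the table's direction, deletions preferentially remove later occurrences, or the final candidate is stopped-at more often, as in \SARW. The computation conditions on the walk being at $i$ for the first time on the ballot. From there, the generated ballot ends at $i$ with probability at least $m_{i\star}$, because all the alternative continuations $i\to j$ are partially blocked or truncated.

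I expect the main obstacle to be the \REJ row and the stopping-weight rows for \KeepFirst and \LERW. In those cases the comparison is not a pathwise deletion but a reweighting or a reassignment of which candidate ends up last. The plan there is to use a first-visit decomposition of the excursion and an explicit injection from ``bad'' raw paths to ``good'' ones of smaller length. If the monotone coupling fails, the fallback is to argue strictness directly: realize the cycle through $i$, then compute the contribution of that one positive-probability path.
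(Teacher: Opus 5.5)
Your plan has a genuine gap in the \REJ and \SARW length rows. The table says ``not preserved'' there, not ``down'': under these two rules the expected generated length can \emph{exceed} $L$. So the stochastic-domination argument you plan for \SARW and the negative-correlation argument for \REJ are aimed at a false statement.

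For \SARW, the coupling breaks because a blocked transition sends the self-avoiding walker to a different state than the raw walker, and the stopping hazard depends on the state. Take the paper's example: $19$ copies of $(A_3)$, three of $(A_3,A_4,A_1,A_2)$, one of $(A_4,A_3,A_2)$. After $\star A_3A_4$ the raw walk can return to $A_3$ and stop with probability $19/23$. \SARW is instead forced along $A_1A_2$, and the expected length rises from $34/23$ to $63163/42320$.

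For \REJ, ``extending a prefix can only create repeats'' gives monotonicity only along a fixed prefix. It does not give a negative correlation between length and validity over the whole excursion. Take equal shares of $(A_1),(A_2),(A_1,A_2),(A_2,A_1),(A_3,\dots,A_{k+3})$. The cycles live among candidates on short ballots, while the long ballots run through an acyclic chain. Rejection then discards mostly short raw ballots, and the accepted length exceeds $L$ for large $k$.

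What these entries need is examples in both directions, not a monotonicity proof. Equal shares of $(A),(B),(A,B),(B,A)$ give a full-circulation profile with $L=3/2$, but generated length $5/4$ under \REJ and $4/3$ under \SARW. The two examples above go the other way. They are not themselves full circulation, but adding a small mass of bullet votes and reversed pairs makes them so. The strict increase survives by continuity: under both rules the generated distribution is a rational function of $M$ on finitely many valid ballots, with denominators bounded away from $0$ near these profiles.

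There are two smaller gaps. First, you never treat the \MaxEntropy row. It is immediate: the maximum-entropy distribution has succession matrix exactly $M$, which fixes every $T_i$ and equivalently fixes $Y$, hence $\mu$ and $L=\sum_i\mu_i$. Second, for the \LERW stopping weights, conditioning on the first visit to $i$ does not work directly, because that visit may later be erased. The paper avoids this by noting that loop erasure preserves the endpoint, so $\Pr(\text{ballot ends at } i)=\mu_iT_i$ exactly. Meanwhile $\Pr(i\text{ survives})\le\Pr(N_i\ge 1)\le\mathbb{E}[N_i]=\mu_i$ by Proposition~\ref{prop:steady_state}, with strict inequality once $i$ can be visited twice.

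The \REJ stopping weight needs no injection. Given a duplication-free prefix ending at $i$, the step $i\to\star$ is accepted with certainty and every other step with probability at most $1$, so conditioning on acceptance can only favor stopping.

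The rest of your plan matches the paper. That covers the pathwise shortening for \Terminate, \KeepFirst and \LERW, with strictness from the path $\star\, i\cdots i\,\star$. It also covers the blocking arguments for the \Terminate, \SARW and \KeepFirst stopping weights.
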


\begin{table}[htb!]\centering
\begin{tabular}{|c|cc|}
\hline 
     & ballot length & stopping weight\\ \hline \hline 
\REJ  & not preserved & up \\ 
\Terminate  & down & up \\ 
\KeepFirst  & down & up \\ 
\LERW  & down & up \\ 
\SARW  & not preserved & up \\ 
\MaxEntropy  & preserved & preserved \\ \hline 
\end{tabular}
\caption{If $P$ is a full-circulation profile, then the expected ballot length is strictly decreased under \Terminate, \KeepFirst, and \LERW processing rules, while it can depend on $P$ for \REJ and \SARW. Relatedly, the stopping weight is strictly increased for all candidates under these methods. Only the computationally costly \MaxEntropy method is guaranteed to preserve these features of $P$.}\label{tab:ballot-stats}
\end{table}

The basic reason for this result is simple:  raw ballots have expected length $L$, and the \Terminate, \KeepFirst, and \LERW processing rules remove symbols in the case of repetition, shortening the ballot. 
More generally, candidate cycles create repeated symbols, and each random-walk generation rule suppresses or modifies cycles in a way that increases the conditional probability of stopping after a given candidate.  Details can be found in Appendix \ref{sec:ballotstats}.

\section{Tuning ballot length via stopping weights}\label{sec:tuning-length}

The results in Table~\ref{tab:ballot-stats} suggest a natural way to compensate for the ballot-shortening effect of reconstitution: decrease the stopping weights before generating ballots.  We will set this up to target a range of ballot lengths.

Suppose throughout this section that $P$ is a full-circulation profile,
so that $0<T_i<1$ for every candidate $i$.  For $\lambda\geq 0$, define a
modified stopping weight $T_i(\lambda)$ by multiplying the odds of stopping
by $\lambda$:
$$\frac{T_i(\lambda)}{1-T_i(\lambda)} = \lambda\cdot\frac{T_i}{1-T_i}.$$
Equivalently, $T_i(\lambda)=\frac{\lambda T_i}{1-T_i+\lambda T_i}$.
Thus $\lambda=1$ gives the original succession matrix, $\lambda=0$ suppresses
stopping entirely, and $\lambda\to\infty$ forces increasingly early stopping.
In terms of log-odds, this simply shifts every stopping weight by adding $\log \lambda$.

Let $M(\lambda)$ be obtained by replacing each $T_i$ by $T_i(\lambda)$ and
rescaling the remaining entries in row $i$ proportionally:
$$m_{ij}(\lambda)
      = m_{ij}\cdot \frac{1-T_i(\lambda)}{1-T_i},
      \qquad j\neq \star.$$
Thus the relative transition probabilities from $i$ to other candidates
are unchanged.
For a ballot-generation rule $R$, let $L_R(\lambda)$ denote the expected
ballot length obtained by applying $R$ to $M(\lambda)$.  We say that a target
length $\ell$ is attainable for $R$ if $L_R(\lambda)=\ell$ for some $\lambda$,
and write $R_\ell$ for the resulting tuned method.

\begin{proposition}[Target ballot length]
\label{P:target_length}
For autocomplete with a processing rule $$R\in \{\Terminate,\KeepFirst, \SARW \},$$ the function
$\lambda\mapsto L_R(\lambda)$ is continuous and non-increasing.  Moreover, since 
$\lim\limits_{\lambda\to\infty}L_R(\lambda)=1$,   every target length $1< \ell\leq L_R(0)$ is attainable.
\end{proposition}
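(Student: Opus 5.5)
Our plan is to couple the random walks for different values of $\lambda$ on a common probability space, so that monotonicity becomes a pathwise statement. Use the standard ``thinning'' construction. At each visit to a candidate $i$, draw an independent uniform $U\in[0,1]$. The walk stops (moves to $\star$) if $U<T_i(\lambda)$. Otherwise it moves to candidate $j\neq\star$ with probability $m_{ij}/(1-T_i)$, chosen using a second independent uniform $V$ that does not depend on $\lambda$.

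Because $T_i(\lambda)=\lambda T_i/(1-T_i+\lambda T_i)$ is increasing in $\lambda$, the event of stopping at a given step only grows with $\lambda$. Meanwhile, the non-stopping transition law $m_{ij}(\lambda)/(1-T_i(\lambda))=m_{ij}/(1-T_i)$ is independent of $\lambda$. So, up to the first stop, the coupled walks follow the same underlying sequence of proposed candidates $\star, A_{t_1}, A_{t_2},\dots$, and the walk at larger $\lambda$ stops no later along it. This also covers $\lambda=0$, where stopping is suppressed.

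Next we check, rule by rule, that the resulting ballot is a monotone function of the stopping time.
\begin{itemize}
\item Under \Terminate, the ballot is the prefix of the proposed sequence cut at the first of (a) the stop time and (b) the first repeat. Its length is non-increasing in $\lambda$.
\item Under \KeepFirst, the ballot is the set of distinct names in the prefix before the stop. Its length is the number of distinct symbols in a shorter prefix, hence non-increasing.
\item For \SARW, rejected transitions must be coupled too. We generate, at each step, the next candidate from the renormalized law over unvisited candidates using the fixed uniform $V$, together with a $\lambda$-dependent stop test. The stop test must be phrased so that the stopping probability among allowed moves, $T_i(\lambda)/(T_i(\lambda)+\sum_{j\text{ unvisited}}m_{ij}(\lambda))$, is increasing in $\lambda$, which holds because the ratio $T_i(\lambda)/(1-T_i(\lambda))$ is increasing. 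With that, the unvisited-candidate sequence is again common to all $\lambda$, and only the stopping index moves earlier.
\end{itemize}
In each case, taking expectations gives that $L_R(\lambda)$ is non-increasing.

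For continuity, expand
\[
L_R(\lambda)=\sum_{k\ge1}\P(\text{length}\ge k).
\]
For \SARW and \Terminate, lengths are bounded by $m$, so this is a finite sum of probabilities. Each probability is a finite sum of products of transition probabilities, which are rational and continuous in $\lambda$.

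For \KeepFirst, raw ballots may be unbounded, and this is the main technical point. The length is still bounded by $m$, but each event $\{\text{length}\ge k\}$ is a countable union over raw paths. We will use dominated convergence. On any compact range $\lambda\in[0,\Lambda]$, the tail of the raw path is governed by eventual stopping. Away from $\lambda=0$, the stop probability at every step is bounded below by $\min_i T_i(\lambda_0)>0$, which gives a geometric tail and hence uniform convergence. At $\lambda=0$ itself, the walk never stops, but the ballot stabilizes once all reachable names are seen. By irreducibility (full circulation), this happens almost surely with geometric tail, uniformly near $0$, since the non-stopping kernel is fixed. So every event depends continuously on $\lambda$.

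For the limit, as $\lambda\to\infty$ we have $T_i(\lambda)\to1$ for all $i$, so the probability that the walk stops immediately after the first candidate tends to $1$. The first step from $\star$ always lands on a candidate, so every ballot has length at least $1$. Since lengths are at most $m$, $L_R(\lambda)\to1$. Attainability of every $\ell\in(1,L_R(0)]$ then follows from the intermediate value theorem on $[0,\Lambda]$ for $\Lambda$ large enough that $L_R(\Lambda)<\ell$.

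The step we expect to be most delicate is the \SARW coupling, where the renormalization over unvisited candidates must be shown to preserve monotone stopping. We would verify it first via the odds formulation above.
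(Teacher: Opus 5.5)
Your proposal is correct and takes the same route as the paper's proof. The paper couples the walks so that the $\lambda$-independent continuation kernel $m_{ij}/(1-T_i)$ is shared and only the monotone stopping test varies, gets continuity from the transition probabilities, uses $T_i(\lambda)\to 1$ as $\lambda\to\infty$, and finishes with the intermediate value theorem; your thinning construction, the odds computation for \SARW, and the truncation argument for \KeepFirst supply the details the paper leaves implicit.

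One small correction: full circulation does not make the candidate-only chain irreducible (two disjoint $2$-cycles is a counterexample), so ``all reachable names are seen'' can fail. Your argument does not actually need irreducibility. In any finite chain the set of visited names stabilizes almost surely, with a geometric tail, once the walk has entered some closed class and visited all of its states, and that stabilization is all your \KeepFirst continuity argument at $\lambda=0$ uses.
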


\begin{proof}
After conditioning on continuation from candidate $i$, the transition
probabilities among candidates are independent of $\lambda$.  Increasing $\lambda$
only increases the probability of stopping at each stage.  Thus the processes can be coupled so that increasing $\lambda$ can only shorten the
generated ballot, establishing monotonicity.  Continuity follows directly from the transition probabilities.

As $\lambda\to\infty$, stopping occurs immediately after the first candidate with probability tending to one, so $L_R(\lambda)\to 1$.  At $\lambda=0$, ordinary
stopping is suppressed completely.
The intermediate value theorem now gives every target length between $1$ and $L_R(0)$.
\end{proof}

$L_{\Terminate}(0)$ has a simple interpretation: the expected number of distinct candidates visited before the first repeat in the candidate-only random walk. Thus suppressing ordinary stopping does not suppress termination caused by cycles. In particular, $\Terminate_L$ exists iff $L\leq L_{\Terminate}(0)$. This condition holds for almost all, but not all, of the Scottish profiles.

The same odds-shifting construction makes sense for \REJ and
\LERW whenever $\lambda>0$, so these methods can also be tuned numerically
to any target length that occurs in the range of $L_R$.  What is
missing is the clean existence result above.  At $\lambda=0$ the raw walk
never terminates, and conditioning on validity for \REJ or applying
loop erasure for \LERW makes the effect of decreasing $\lambda$ less directly
controlled.  In particular, the original mean ballot length $L=L(P)$ need not
belong to the attainable range for \LERW.

\begin{definition} 
Recall that $\KeepFirst_\ell$ and $\SARW_\ell$ denote the result of using \KeepFirst and \SARW respectively with $M(\lambda)$ for $L_R(\lambda)=\ell$.  Special cases are $\KeepFirst_L,\SARW_L$, which match the training profile $P$ in expected ballot length, and $\KeepFirst_m,\SARW_m$, which set stopping weights to zero to generate complete ballots.
\end{definition}

Monotonicity allows a fast iterative algorithm for finding the right scaling factor $\lambda$ to hit a target length $\ell$.  We will show ballots tuned to full length ($\ell=m$) in Figure~\ref{fig:completion} and to the observed average ballot length from an election ($\ell=L(P)$) in Figure~\ref{fig:dist-to-P}.

A theoretically slightly more flexible approach is to tune the $m$ stopping weights individually, rather than adjusting them all uniformly, to approach the stopping weights of $M$; however, this trades simplicity for flexibility.  Instead, see Appendix~\ref{sec:maxentropy} for the development of the \MaxEntropy method, which can roughly be thought of as tuning all of the succession weights, not just the stopping weights, to attain fidelity to the whole succession matrix.

\section{Comparing generative methods}\label{sec:comparing}

To assess the performance of the succession matrix as a profile summary, it will be helpful to make direct comparisons of autocompleted profiles with parallel generative methods that use positional or head-to-head comparison data. Our aim  is not to compare against the best available fitted ranking models, but to compare the information retained by three directly empirical summaries.  To that end, we briefly recap the succession model and build similarly fast and local methods that work from the other two summaries.  All three methods have a similar stopping role for the symbol $\star$, so all three admit similar forms of length tuning.

\paragraph{Succession model} 
From the observed profile $P$, form the succession matrix $M$ on the candidates plus start/stop symbol $\star$. Here $m_{ij}$ is the frequency with which $j$ immediately follows $i$, conditional on $i$ appearing.
Generate a ballot sequentially, starting at $\star$. If the current state is $i$, choose the next state from $\star$ and the as-yet-unranked candidates, with probabilities proportional to the corresponding entries $m_{ij}$. If $\star$ is chosen, stop; otherwise append the chosen candidate and continue.
This is Autocomplete-\SARW.

\paragraph{Position model}
For every rank $r$, form position frequencies  for the candidates together with $\star$: $f_{rj}$ is the frequency with which candidate $j$ appears in position $r$, while
$$f_{r\star}=\P\left( \hbox{\rm ballot stops at rank} ~ r-1 \right).$$

Generate a ballot from top to bottom. At rank $r$, choose from $\star$ and the candidates not yet ranked, with weights $f_{rj}$. Remove already-ranked candidates and renormalize. If $\star$ is chosen, stop; otherwise append the selected candidate and continue.  

\paragraph{H2H model}
From $P$, form the head-to-head (H2H) matrix $H$ on the candidates together with the start/stop symbol $\star$. For candidates $i,j$,
$h_{ij}$ is the share of all ballots on which $i\succ j$, counting a ballot that ranks exactly one of them as a win for the ranked candidate.  Thus $h_{ij}+h_{ji}$ is the share of ballots that decide the pair by ranking at least one of them, and
$h_{j\star}=1-h_{\star j}=\mu_j$, the frequency of mentions.

Choose a uniformly random arrival order of $\C^\star$, and build a list by randomized insertion. When $C$ arrives and
the current list is $A_1\dots,A_k$, give the position after the first $t$ entries the weight
$w_t(C)=h_{A_t,C}\, h_{C, A_{t+1}}$, where $w_0(C)=h_{C,A_1}$ and $w_k(C)=h_{A_k,C}$. Choose an insertion position
proportional to these weights; the generated ballot is the segment of the
final list above $\star$.  Once $\star$ has been placed, the candidates below it are an unordered pool rather than a ranked list, so an arriving candidate sees the gaps of the ranked segment together with a single terminal position below $\star$, of weight $h_{\star C}$.\footnote{The ballot distribution is computed by a forward dynamic program over generator states, averaging over the random arrival order. Note also that only H2H can produce empty ballots; this happens in up to 9\% of generated ballots in some of the experiments, and we drop empty ballots in reported statistics.}
Thus the H2H model is a randomized insertion sort driven by the observed pairwise preferences.

In the completion experiment, stopping is suppressed for every model:
for Succession this means deleting the $\star$ column from $M$, and for H2H it means omitting $\star$ from the arrival order.

Each generator thus assembles ballots sequentially, with local decisions only.

\section{Empirics}\label{sec:empirics}
In this section, we study the properties of profiles $\tP$ which are generated via the methods described above from a training profile $P$.
We will use the Wasserstein (``earth-mover'') distance as one important metric on profiles.  That is, we consider $P$ and $\tP$ as probability mass distributions on the set of possible ballots, which is a metric space with respect to the Spearman footrule extended to partial ballots in~\cite{Duchin_Tapp_2026} as a ``pessimistic Borda distance.'' We define 
$\dd(P,\tP):= \dwass(P,\tP) / \mathop{\rm diam}$, 
where $\mathop{\rm diam}=\lfloor m^2/2\rfloor /2$ is the diameter of the space of complete ballots.
This gives a distance on $m$-candidate profiles that is normalized so that complete-ballot profile comparisons lie in $[0,1]$, and in fact generalizes to a distance between {\em measures} on the ballot set.


\subsection{Structural fidelity}\label{subsec:fidelity}

An election generator trained on a profile $P$ would ideally be able to create new profiles that are similar to $P$ in structural ways. In Figure~\ref{fig:train-test}, we compare generators as follows:  for each of the 282 7-candidate elections in the Scottish dataset, draw 20\% of the ballots at random as training data, regarding the other 80\% as held-out testing data.  Three particular elections from 2017---Pentland Hills (Edinburgh Ward 2), North Ayrshire Ward 7, and Renfrewshire Ward 10---are highlighted in the data.

\begin{figure}[ht]
    \centering
    \includegraphics[width=4.4in]{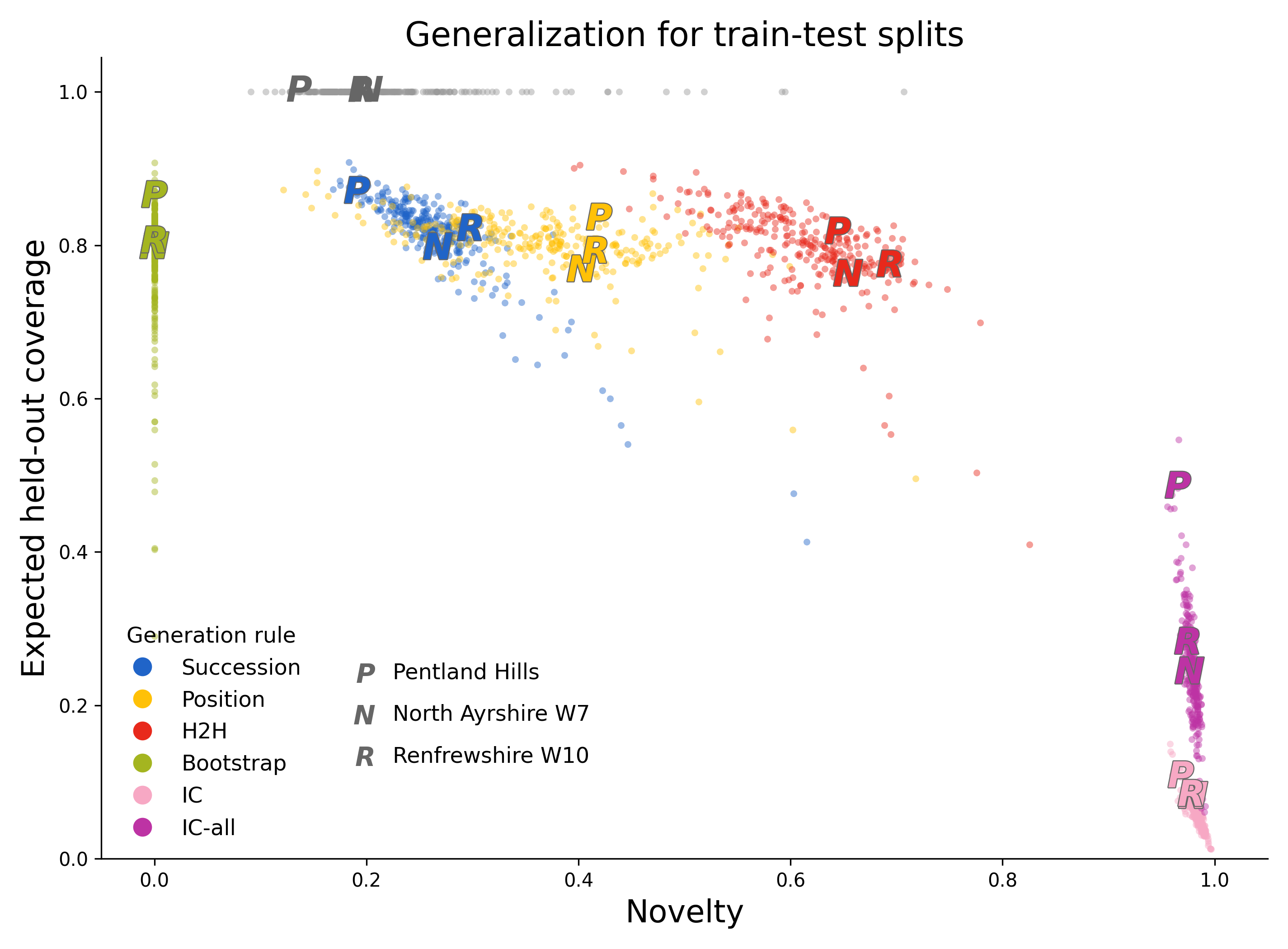}
    \caption{For the 282 7-candidate Scottish elections, we compare the performance of different generative models trained on a subsample of the given profile $P$, using a single 20-80 training-testing split for each. {\em Novelty} and  {\em coverage} are exact expectations with respect to the distribution $\tP$. Three specific elections are highlighted.}
    \label{fig:train-test}
\end{figure}

The {\em coverage} of a distribution $\tP$ is the expected share of the held-out ballots whose type occurs in a generated election of the same size (80\% of the original $P$).  The {\em novelty} of the method is the expected share of generated ballots whose type is absent from the training data.  
The actual held-out sets are plotted in gray and appear at the top:  they necessarily have perfect coverage,  but their rate of novelty with respect to the training data varies.  

In the plot, IC (impartial culture) puts equal weight on all $m!$ complete ballots, while IC-all puts equal weight on all well-formed ballots, whether complete or partial.  Unsurprisingly, these have very high novelty rates but poor performance at recovering the held-out ballots.  Bootstrap refers to resampling from the training data; it can't produce any novelty, by construction.
Of the named elections, Pentland Hills is the largest election of the three (11,315 voters compared to 4733 and 5630), so the 20\% training data is richest and the 80\% generated sample is largest. This explains the better coverage for the $P$ election across all methods.\footnote{Figure~\ref{fig:train-test} uses a single 20-80 split per election, and the novelty and coverage calculations are exact with respect to that split. To assess sensitivity to this choice, we repeated the full experiment on 20 independent splits of the Pentland Hills contest; across models, the split-to-split standard deviations were at most 0.004 for coverage and 0.02 for novelty---roughly an order of magnitude below the between-model differences visible in the figure.}

A realistic generator should comport with the held-out novelty rate. Succession achieves coverage similar to the bootstrap, while matching the novelty rate most closely.

\begin{table}[htb!]\centering
\footnotesize
\def\overcolor{\cellcolor{teal!50}}
\def\undercolor{\cellcolor{lust!50}}
\begin{tabular}{|c||cccccc||c|}
\hline
 & $k=1$ & $k=2$ & $k=3$ & $k=4$ & $k=5$ & $k\ge 6$ & types\\ \hline \hline
expectation  & 598.1 & 163.7 & 76.0 & 47.5 & 29.0 & 183.4 & 1097.7 \\
sample draw  & 582 & 169 & 85 & 38 & 31 & 188 & 1093 \\ \hline \hline 
Succession  & 666.1 & 183.1 & 84.8 & 48.2 & 30.6 & 167.2 & 1180.0 \\
Position  & 638.6 & 155.4 & 71.1 & 42.9 & 29.8 & \overcolor 226.6 & 1164.5 \\
H2H  & \overcolor 2368.7 & \overcolor 576.0 & \overcolor 209.0 & \overcolor 103.0 & \overcolor 60.0 & \overcolor 221.5 & \overcolor 3538.2 \\
Bootstrap  & \undercolor 20.7 & \undercolor 41.7 & \undercolor 56.5 & \overcolor 58.5 & \overcolor 49.9 & \overcolor 242.5 & \undercolor 469.9 \\
IC  & \overcolor 1502.3 & \overcolor 1349.2 & \overcolor 807.7 & \overcolor 362.6 & \overcolor 130.2 & \undercolor 51.8 & \overcolor 4203.7 \\
IC-all  & \overcolor 4675.1 & \overcolor 1544.5 & \overcolor 340.1 & 56.2 & \undercolor 7.4 & \undercolor 0.9 & \overcolor 6624.3 \\
\hline
\end{tabular}
\caption{Frequency spectrum for Pentland Hills: the expected number of ballot types occurring exactly $k=1,2,3,4,5$ times (and six or more times) in a test-size sample, with the expected total number of distinct types. Cells are shaded \colorbox{lust!50}{red}
and \colorbox{teal!50}{teal}
when more than 20\% off from true expectation.}\label{tab:spectrum-pentland}
\end{table}

Table~\ref{tab:spectrum-pentland} compares the methods in terms of accurately capturing the shape of the long tail of the profile $P$:  ballot types that occur a small number of times in the test data.  In this table, the expected number is exact with respect to 80\% random samples of $P$, while the specific sample draw is from the split used in Figure~\ref{fig:train-test}.  Succession is always within 20\% of the correct mass (and Position is similarly successful but somewhat overweights the more common types).

We can also compare the methods in terms of rare-ballot calibration: are there any specific types that are ``hallucinated''---that is, high-confidence predictions with no real-world referent?  Positional generation is particularly hallucination-prone, in interesting ways.  In Pentland Hills, the $1\succ5$ ballot is never observed in either the training or the testing data; it is predicted to occur $11.6$ times when Succession creates profiles at test size, versus $235.9$ times in Positional output and $45.8$ times in H2H output.  This is essentially because it is a party-disrupting ballot, with the pattern Con$\to$SNP (a Conservative candidate followed by one from the Scottish National Party).  Candidate 1 is often first and candidate 5 is often second in $P$, but typically paired with members of their own party.\footnote{Similar party-disrupting ballots $3\succ 6$ in North Ayrshire and $5\succ4\succ2$ in Renfrewshire get an expectation of 32.3 and 19.7 Positional occurrences, respectively, though they do not appear in the elections and are correctly labeled as unlikely by Succession. The situation is not reciprocal:  the most-hallucinated Succession ballot in Pentland is $6\succ 2\succ 7$, but its expected Succession frequency of 14.7 is smaller than its expected Position frequency of 20.4.} Succession data is well-suited to pick up on same-party runs.

\subsection{Completion}\label{subsec:completion}

We test whether patterns learned from short ballots predict the distribution of complete rankings, using the $\ell=m$ setting from \S\ref{sec:tuning-length}. In each of the same 282 seven-candidate elections, we train on the subset of the profile made up of ballots of length at most four. Then we compare generated complete ballots with observed complete ballots using normalized Wasserstein distance.\footnote{For Positional data specifically, we fill in the candidate in rank $r$  on a generated ballot with probabilities proportional to rank-$r$ frequencies; because training data doesn't see lower ranks, we reuse the rank-4 frequencies for ranks 5 and beyond.}

The {\em random completion} benchmark (second column) samples from the short ballots and orders the remaining candidates uniformly at random. The {\em bootstrap benchmark} is shown in gray at the bottom of the figure: if the original election has $N$ complete ballots, draw $N$ with replacement from that set itself and measure the normalized Wasserstein
distance from the sample to the original.

Figure~\ref{fig:completion} shows that only autocomplete improves on random completion: the learned succession patterns provide useful information for predicting complete rankings.

\begin{figure}[ht]
    \centering
    \includegraphics[width=4.4in]{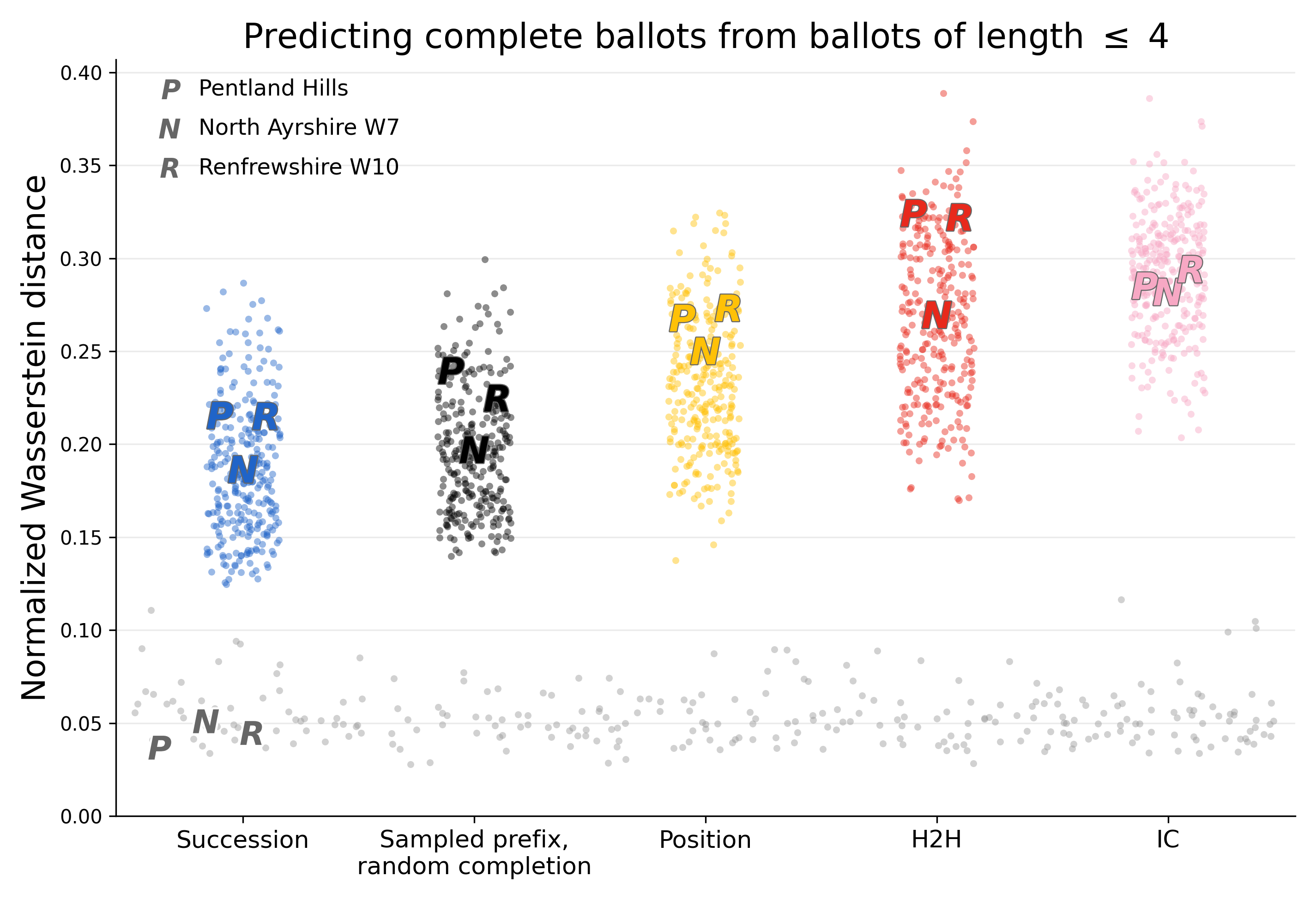}
    \caption{For the same contests, we now compare the performance of the generators at predicting complete ballots ($\ell=m$) when trained only on ballots of length $\le 4$.}
    \label{fig:completion}
\end{figure}

\subsection{STV outcomes}\label{subsec:stv}

One important measurement of fidelity of the synthetic profile $\tP$ to the original profile $P$ is the set of winners.  For each $P$, we train each model on the full profile, generate 10 new profiles of the same size, and compare the new STV winner sets to the original.

\begin{figure}[ht]
    \centering
    \includegraphics[width=4.4in]{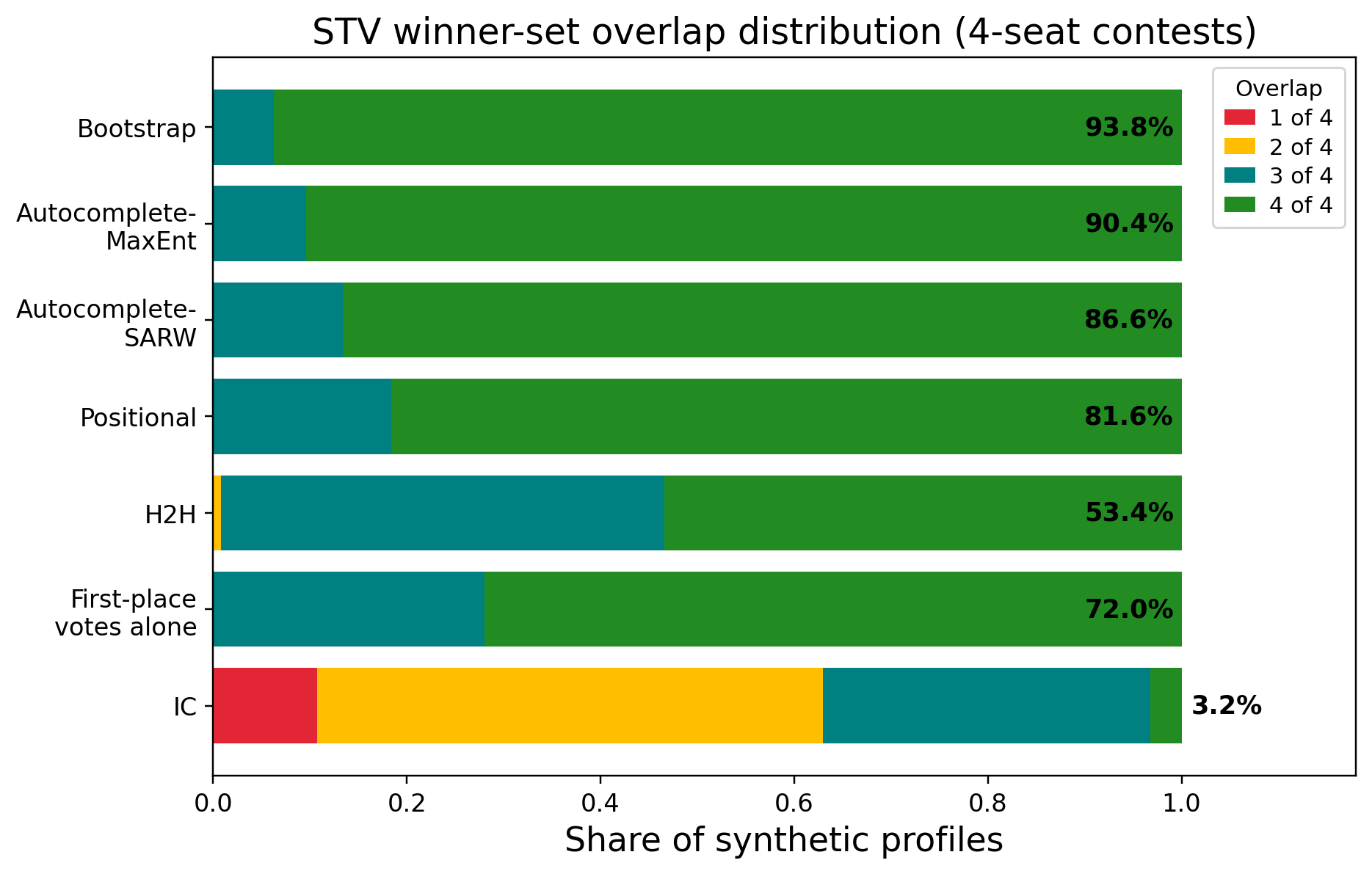}
    \caption{If the reconstituted preferences are fed to an STV voting rule, how much does the winner set $f(\tP)$ overlap with the original $f(P)$?  This figure shows the results for the 125 Scottish contests with 7 candidates and 4 winners. The {\em First-place votes alone} comparator simply takes the four candidates with the most first-place votes from $P$.}
    \label{fig:STV-k4}
\end{figure}

In Figure~\ref{fig:STV-k4}, we see that even resampling from the full original profile changes winners approximately 6 percent of the time, reflecting a sensitivity of STV election itself.  The ``first-place votes alone'' benchmark simply selects the candidates with the most first-place support; since STV is a round-by-round election system that is initialized with first-place votes, this is a crude proxy for the eventual winners.  Positional generation, autocomplete-\SARW, and especially autocomplete-\MaxEntropy all improve on the first-place-vote proxy substantially, with the maximum entropy method performing nearly as well as the bootstrap from the full $P$. By contrast, the head-to-head data yields poor predictions of STV winners.  This is largely because, of the data considered, head-to-head data is the worst for the purpose of reconstructing first-place votes.

Results for the 157 7-candidate contests with 3 winners are very similar; see Figure~\ref{fig:STV-k3} in Appendix~\ref{sec:empirical-details}.

\section{Iteration dynamics}\label{sec:iteration}

We have seen that the succession matrix of a generated $\tP$ need not equal
the original matrix $M$: by Theorem~\ref{thm:stopping-weights}, candidate cycles
create repeated symbols, and each random-walk generation rule suppresses or
modifies cycles in a way that increases the conditional probability of
stopping after a given candidate. To close our analysis, we study the
dynamical system obtained by iterating the train-and-generate operation to
obtain a sequence
$P_0\rightarrow M_0\rightarrow P_1 \rightarrow M_1 \rightarrow\cdots$. Since
the limit behavior depends only on $M_0$, one motivation is to better
understand what information about a preference profile is encoded in the
succession matrix.

Since all of our rules generate profiles via i.i.d. draws from a distribution on ballots, it is cleaner to iterate between transition matrices and distributions on ballots (rather than on profiles).  For this, let $\M$ be the set of transition matrices on $\C^\star=\{\star,A_1,\dots,A_m\}$, $\B$ the set of (partial or complete) nonempty ballots, and $\Delta(\B)$ the set of probability distributions on $\B$ for which each candidate has a positive probability of being mentioned on a ballot.

For a given ballot generation rule $R$, we have maps as follows:
$$\begin{tikzcd}
\Delta(\B)
  \arrow[r, bend left=20, "g"]
&
\M.
  \arrow[l, bend left=20, "f_R"]
\end{tikzcd}$$
Here $g$ carries a probability measure on ballots, $\nu\in\Delta(\B)$, to the corresponding transition matrix $M$, defined just like with profiles; that is $M_{ij}=\P_{\sigma\sim\nu}(ij\in\sigma\mid i\in\sigma)$ is the probability of $ij$ given $i$ for a ballot drawn from $\nu$.  The map in the other direction, $f_R:\M\to\Delta(\B)$, is defined so that $f_R(M)(\sigma)$ is the probability that a ballot produced by $M$ under rule $R$ equals $\sigma$.  We will write $f$ when the rule $R$ is fixed, and $\Phi_R = g\circ f_R:\M\to\M$ for the induced map on matrices.

We consider  $R$ to be one of $\Terminate,\SARW, \LERW, \KeepFirst$. \MaxEntropy is excluded
because its dynamics are trivial: it is faithful to the succession matrix by
construction, so $\Phi_{\MaxEntropy}$ is the identity. 

We call $M$
{\em two-sided} if $m_{\star i}>0$ and $m_{i \star}=T_i>0$ for every
candidate $i$ (meaning that every candidate is sometimes first and sometimes
last); the succession matrix of a full-circulation profile is two-sided.

\begin{theorem}\label{thm:iteration}
Let $M$ be two-sided. Then $\Phi_R(M)$ has the same $\star$ row as $M$, and
$T_i(\Phi_R(M))\geq T_i(M)$ for every candidate $i$, with strict inequality
for every candidate $i$ that lies on a positive-weight candidate cycle in $M$.
Moreover:
\begin{enumerate}
\item $M$ is fixed under iteration (that is, $\Phi_R(M)=M$) if and only if
$M$ is acyclic.
\item Every limit point of the orbit $M_n=\Phi_R^n(M)$ is an acyclic matrix
fixed by $\Phi_R$; equivalently, every profile it generates is ordered. If
$R=\Terminate$, then the orbit has a unique limit.
\end{enumerate}
\end{theorem}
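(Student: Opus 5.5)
We first treat the one-step claims, then the dynamics.

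\emph{The $\star$ row.} Under each rule $R\in\{\Terminate,\SARW,\LERW,\KeepFirst\}$ the first candidate of a generated ballot is the first step of the walk from $\star$, because no rule alters the first symbol. Since $g$ reads off first-place frequencies, $\Phi_R(M)$ has the same $\star$ row as $M$.

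\emph{Stopping weights.} For a candidate $i$, write $T_i(\Phi_R(M))=\P(i\text{ last}\mid i\in\sigma)$ for $\sigma\sim f_R(M)$. I would condition on the \emph{first} visit of the underlying walk to $i$ (for \KeepFirst, \Terminate, \SARW) or on the surviving occurrence (for \LERW). Given that the walk is at $i$ and $i$ is retained, the next raw step goes to $\star$ with probability $T_i$ in every case; for \SARW it is $T_i$ divided by the mass on unvisited candidates plus $T_i$, which is $\ge T_i$. Even when the next step goes elsewhere, $i$ can still end up last: under \Terminate, every successor may be a repeat; under \KeepFirst, every later symbol may be a repeat; under \LERW, a later loop may erase everything after $i$. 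So $T_i(\Phi_R(M))\ge T_i$. Strictness for $i$ on a positive candidate cycle $i\to j_1\to\cdots\to i$ comes from exhibiting a positive-probability path of the following form. Go $\star\to\cdots\to i$ (this has positive probability since $m_{\star i}>0$). Then traverse the cycle back to a candidate already seen, so that the cycle induces a duplicate making $i$ (or the cycle's last new vertex) terminal in the processed ballot. Finally stop, using two-sidedness ($T_k>0$). This path contributes extra probability to ``$i$ last'' beyond the direct $T_i$ term. I expect the rule-by-rule bookkeeping here, especially for \LERW and \SARW, to be the main fiddly step. For \SARW, the strictness comes instead from the denominator: when some successor of $i$ on the cycle has already been visited, the renormalized stopping probability is $>T_i$.

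\emph{Part 1.} If $M$ is acyclic, no walk repeats a candidate before $\star$, so all rules output the raw ballots, and $g$ recovers $M$ exactly. This is the earlier observation that generation methods coincide in the acyclic case. Conversely, if $M$ has a candidate cycle, strictness gives $T_i(\Phi_R(M))>T_i(M)$ for some $i$, so $\Phi_R(M)\ne M$.

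\emph{Part 2.} Along the orbit, each $T_i(M_n)$ is nondecreasing and bounded by $1$, hence convergent, and the $\star$ row is constant. Take a limit point $M_\infty$ of the compact set of matrices; it inherits the $\star$ row and has $T_i(M_\infty)=\lim T_i(M_n)>0$, so it is two-sided. Continuity of $\Phi_R$ on two-sided matrices is the delicate point: ballot probabilities are finite sums of products of entries, with $\SARW$ denominators bounded below by $T_i>0$. Given continuity, choose $M_{n_k}\to M_\infty$; then $\Phi_R(M_{n_k})=M_{n_k+1}$ has stopping weights converging to the same limits, so $T(\Phi_R(M_\infty))=T(M_\infty)$. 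By strictness, $M_\infty$ has no candidate cycle, so it is acyclic and, by Part 1, fixed; its generated profiles are then ordered. For uniqueness under \Terminate, I would show that the conditional succession structure is frozen. Under \Terminate the ballot is a prefix of the raw walk, so for $j\ne\star$ the transition $ij$ appears only when the walk steps $i\to j$ with $j$ unvisited. The new row is then $m_{ij}$ times a factor $\P(j \text{ unvisited}\mid \text{at } i)$, with the rest of the mass moved to $\star$. Tracking these factors, the entries $m_{ij}(n)$ are nonincreasing in $n$ for every $j\ne\star$, so they converge, which forces the whole orbit to converge.
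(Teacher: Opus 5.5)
Your architecture matches the paper's: the first symbol is retained, stopping weights come from first-visit conditioning, strictness comes from a cycle forcing a repeat, acyclic $\Leftrightarrow$ fixed, limit points follow from monotone stopping weights plus continuity plus strictness, and for \Terminate every between-candidate entry is monotone. But the two steps you flagged as delicate are justified incorrectly.

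The first is \LERW. Suppose $i$ survives. Its surviving occurrence is the \emph{last} visit to $i$, together with the event that the rest of the walk avoids the loop-erased past. Conditioning on it is therefore conditioning on the future, not on a stopping time. So the next raw step need not follow the row $(m_{ij})_j$. Also, from that vantage point a step to any candidate makes $i$ \emph{not} last, so your clause about later loops erasing everything after $i$ does not apply there. The paper avoids this with a counting identity. Loop erasure preserves the endpoint of the raw ballot, so $\Pr(\text{ballot ends at } i)=\mathbb E[N_i]\,T_i=\mu_iT_i$ by Proposition~\ref{prop:steady_state}. Meanwhile $\Pr(i\in\sigma)\le\Pr(N_i\ge1)\le\mathbb E[N_i]=\mu_i$. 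The last inequality is strict when $i$ lies on a positive cycle, because then $N_i\ge2$ with positive probability. Alternatively, you could condition on the \emph{first} visit, which is a stopping time, to get $\Pr(\text{ends at } i)\ge T_i\Pr(N_i\ge1)$. You would then still need the separate observation $\Pr(i\in\sigma)\le\Pr(N_i\ge1)$, since under \LERW a visited candidate can be erased.

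The second is continuity. Your claim that ballot probabilities are finite sums of products is true for \Terminate and \SARW but false for \KeepFirst and \LERW. There a processed ballot arises from infinitely many raw walks; for example, $\star AB\star$, $\star ABA\star$, $\star ABAB\star,\dots$ all give $(A,B)$ under \KeepFirst. Continuity of $f_R$ therefore needs a uniform tail bound. The paper gets it from monotonicity: along the orbit and at its limit points, $T_i\ge\epsilon:=\min_jT_j(M_0)>0$. Hence $\Pr(\text{raw length}>k)\le(1-\epsilon)^k$ uniformly on $\{M \text{ two-sided}: T_i(M)\ge T_i(M_0)\ \forall i\}$, and the truncated distributions converge uniformly. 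With these two repairs the rest of your argument goes through.

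A minor point on strictness: under \Terminate, \KeepFirst and \SARW, your path $\star\to i\to\cdots$ around the cycle adds stopping mass at $i$'s cycle-predecessor, not at $i$. To get it at $i$, start instead at $i$'s successor $j$, using $m_{\star j}>0$, as the paper does.
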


Our experimental evidence (Appendix~\ref{sec:iteration-plots}) is consistent with uniqueness of limits for all $R$  but the proof is specific to $R=\Terminate$.
Full proofs are in Appendix~\ref{sec:iteration-details}. The limit encodes a partial ordering of the candidates; this order can depend on the ballot rule $R$ but examples suggest that discrepancies occur only when a comparison is near equilibrium.

We now briefly present the complete solution for the rules \Terminate and
\SARW in the smallest nontrivial case, when there are $m=2$ candidates, in
which case these two methods coincide. A profile in the $m=2$ case is fully
specified by $(a,b,c,d)$, the share of $(A)$, $(B)$,
$\left(\begin{smallmatrix} A\\ B\end{smallmatrix}\right)$,
$\left(\begin{smallmatrix} B\\ A\end{smallmatrix}\right)$ ballots,
respectively.

\begin{proposition}\label{prop:twocand}
Suppose a profile $P$ with two-sided succession matrix has ballot type frequencies $(a,b,c,d)$ as above, and
let $\alpha=a+c$ and $\beta=b+d$ be the first-place shares for $A$ and $B$,
respectively. Define the comparison function
$\varphi=\alpha^2\frac{c}{1-b} -\beta^2\frac{d}{1-a}$.
Then iterating autocomplete under $R=\Terminate,\SARW$ yields an exact limit:
$$\begin{cases}
(\alpha-s,\beta,s,0),& \varphi>0\\
(\alpha,\beta-t,0,t),& \varphi<0\\
(\alpha,\beta,0,0),& \varphi=0,
\end{cases}$$
for some survival weights $s\le c$, $t\le d$.
\end{proposition}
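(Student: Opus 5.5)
The plan is to compute the iteration map explicitly for $m=2$, find a quantity that is nearly conserved, and use it to decide which of the two successions survives.

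\emph{Step 1: reduce to a two-variable recursion.} For the profile $(a,b,c,d)$ the succession matrix has $\star$ row $(m_{\star A},m_{\star B})=(\alpha,\beta)$. The candidate entries are
\[
m_{AB}=\frac{c}{a+c+d}=\frac{c}{\alpha+d},\qquad m_{BA}=\frac{d}{\beta+c},
\]
with $T_A=1-m_{AB}$ and $T_B=1-m_{BA}$. Under \Terminate, a ballot starts at $A$ with probability $\alpha$ and then moves to $B$ with probability $m_{AB}$. From $B$ the walk either stops, or proposes $A$ again, which terminates the ballot. So the ballot $AB$ has probability $\alpha\, m_{AB}$. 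Under \SARW the only allowed move from $B$ is to $\star$, so the two rules give the same ballot distribution. This agrees with Theorem~\ref{thm:iteration}, which says the $\star$ row, and hence $\alpha$ and $\beta$, are preserved. One step of the iteration is therefore
\[
c_{n+1}=\frac{\alpha c_n}{\alpha+d_n},\qquad d_{n+1}=\frac{\beta d_n}{\beta+c_n},\qquad a_n=\alpha-c_n,\quad b_n=\beta-d_n.
\]
Both sequences are nonincreasing and nonnegative, so they converge to limits $c^*$ and $d^*$. Passing to the limit in $c^*(\alpha+d^*)=\alpha c^*$ gives $c^*d^*=0$.

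\emph{Step 2: find the right monovariant.} In the continuum approximation $\dot c=-cd/\alpha$, $\dot d=-cd/\beta$, the quantity $\psi=\alpha c-\beta d$ is conserved. For the discrete map, a short computation should give
\[
\psi_{n+1}=\psi_n\Bigl(1-\frac{c_nd_n}{(\alpha+d_n)(\beta+c_n)}\Bigr).
\]
Since the factor lies in $(0,1]$, the sign of $\psi_n$ is the same for every $n$. Also $\psi_1=\alpha c_1-\beta d_1=\alpha^2\frac{c}{1-b}-\beta^2\frac{d}{1-a}=\varphi$, so the sign of $\varphi$ equals the sign of every $\psi_n$. This is why the statement can be phrased in terms of $\varphi$.

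\emph{Step 3 (main obstacle): rule out both successions dying.} Suppose $\varphi>0$. We must show $\psi^*=\lim\psi_n>0$, because then $\alpha c^*\ge\psi^*>0$ forces $c^*>0$, hence $d^*=0$, and we take $s=c^*\le c$. Since $\psi^*=\psi_0\prod_n(1-\varepsilon_n)$ with $\varepsilon_n\le c_nd_n/(\alpha\beta)$, the product stays positive as long as $\sum_n c_nd_n<\infty$ and each $\varepsilon_n<1$; the latter holds because $\alpha,\beta>0$. The summability follows from telescoping:
\[
c_n-c_{n+1}=\frac{c_nd_n}{\alpha+d_n}\ge\frac{c_nd_n}{\alpha+1},\qquad\text{so}\qquad \sum_n c_nd_n\le(\alpha+1)c_0.
\]
The case $\varphi<0$ is symmetric and gives $(\alpha,\beta-t,0,t)$ with $t=d^*>0$.

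If $\varphi=0$, then $\psi_n=0$ for all $n$, so $\alpha c_n=\beta d_n$ throughout. Combined with $c^*d^*=0$ this forces $c^*=d^*=0$, which gives the limit $(\alpha,\beta,0,0)$. Two-sidedness guarantees $\alpha,\beta>0$, which is used in every division above.
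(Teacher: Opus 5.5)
Your proof is correct and follows essentially the same route as the paper. Both arguments use the explicit one-step recursion $c'=\frac{\alpha c}{\alpha+d}$, $d'=\frac{\beta d}{\beta+c}$, the monotone convergence to $c^*d^*=0$, the quasi-invariant $\psi=\alpha c-\beta d$ with multiplicative factor $1-\frac{cd}{(\alpha+d)(\beta+c)}$ and $\varphi=\psi_1$, and the telescoping bound on $\sum_n c_nd_n$ to keep the product positive. The paper bounds $\alpha+d\le\alpha+\beta$ where you use $\alpha+1$, which makes no difference.
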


We see that the first-place weights are preserved at $\alpha$ and $\beta$.
If $P$ has no bullet votes ($a=b=0$) then $\varphi=c^3-d^3$ and the direction
$A\to B$ or the reverse is just determined by majority preference.

For Pentland Hills, North Ayrshire, and Renfrewshire, the cyclicity $\cyc(M_n,1/2)$ becomes very small with all methods after only a few dozen steps. Computing the limits for these profiles lets us observe that the comparisons generalizing $\varphi$ are highly predictive.  Define $\varphi_{ij}=\alpha_i^2 m_{ij}-\alpha_j^2 m_{ji}$ in terms
of the first-place shares $\alpha_i,\alpha_j$ and the transition
probabilities from $M$. The sign of $\varphi_{ij}$ from the original matrix
$M$ predicts whether $ij$ or $ji$ survives in the limit in 57/63 candidate
pairs across the three elections. For Pentland Hills, the limits
(Figure~\ref{fig:limits} in Appendix~\ref{app:3limits}) allow us to see that
the strongest surviving successions are sequentially ordered within parties.

\section{Conclusion}

In this paper,  we propose that the succession matrix of an observed preference profile serves as a compact summary of the rankings that retains enough information to cheaply generate synthetic profiles that are faithful to many of the key properties of $P$. Under several methods of generating ballots from the succession matrix, we establish theoretical properties of the resulting distribution on ballots. Experiments conducted on a significant corpus of real-world ranking data illustrate the compression and regeneration on realistically complex training data, showing that the algorithms operate efficiently at practical scale.

The empirical results suggest that relatively simple local transition information captures a surprising amount of the global structure of a profile. Measured by the descriptive statistics and fidelity tests introduced here, the synthetic profiles tend to do a good job matching testing data on both the novelty (with respect to training data) and coverage of ballot types.  The autocomplete model is well calibrated on rare ballots,  completes partial ballots better than the alternatives, and frequently reproduces the original STV winner sets. Though \MaxEntropy processing of succession data turns in the best measurable performance against many benchmarks, the fast and scalable random-walk-based methods still perform strongly and compare favorably to alternatives drawn from positional and tournament-based data.

Finally, we introduce the dynamical system constructed by iterating the reconstruction process. We show that its limit points are acyclic and characterize them exactly in the base case of two candidates.  In general, an acyclic limit point induces a partial order on candidates, and we give a heuristic for predicting it.

One limitation is that autocomplete is trained on a single profile; in particular, the profiles it generates have the same candidates as $P$, so it cannot produce elections with a different candidate pool.
Pooling succession data across many elections, for instance by leveraging party structure or otherwise allowing candidate pools to vary, is a natural direction for future work.

\section*{Acknowledgments}The authors are pleased to thank Peter Rock, Piotr Faliszewski, Sarah Cannon, Jennifer Wang, Mala Mukundan, Jack Gibson, and Zach Glaser for useful conversations and overlapping collaborations.  

\bibliographystyle{plain}
\bibliography{bibliography}

@book{levin2017markov,
  author    = {Levin, David A. and Peres, Yuval},
  title     = {Markov Chains and Mixing Times},
  edition   = {2nd},
  note      = {With contributions by Elizabeth L. Wilmer},
  publisher = {American Mathematical Society},
  year      = {2017}
}

@misc{mggg_scot,
  author       = {{MGGG}},
  title        = {Scottish local government election data ({scot-elex})},
  howpublished = {\url{https://github.com/mggg/scot-elex}}
}

@article{mccune2024monotonicity,
  author  = {McCune, David and Graham-Squire, Adam},
  title   = {Monotonicity anomalies in {S}cottish local government elections},
  journal = {Social Choice and Welfare},
  volume  = {63},
  pages   = {69--101},
  year    = {2024},
  doi     = {10.1007/s00355-024-01522-5}
}

@inproceedings{boehmer2024guide,
  title     = {Guide to Numerical Experiments on Elections in Computational Social Choice},
  author    = {Boehmer, Niclas and Faliszewski, Piotr and Janeczko, {\L}ukasz and
               Kaczmarczyk, Andrzej and Lisowski, Grzegorz and Pierczy{\'n}ski, Grzegorz and
               Rey, Simon and Stolicki, Dariusz and Szufa, Stanis{\l}aw and W{\k{a}}s, Tomasz},
  booktitle = {Proceedings of the Thirty-Third International Joint Conference on
               Artificial Intelligence, {IJCAI-24}},
  publisher = {International Joint Conferences on Artificial Intelligence Organization},
  editor    = {Kate Larson},
  pages     = {7962--7970},
  year      = {2024},
  month     = {8},
  note      = {Survey Track},
  doi       = {10.24963/ijcai.2024/881},
}

@inproceedings{faliszewski2023diversity,
  title     = {Diversity, Agreement, and Polarization in Elections},
  author    = {Faliszewski, Piotr and Kaczmarczyk, Andrzej and Sornat, Krzysztof and
               Szufa, Stanislaw and Was, Tomasz},
  booktitle = {Proceedings of the Thirty-Second International Joint Conference on
               Artificial Intelligence, {IJCAI-23}},
  publisher = {International Joint Conferences on Artificial Intelligence Organization},
  editor    = {Edith Elkind},
  pages     = {2684--2692},
  year      = {2023},
  month     = {8},
  note      = {Main Track},
  doi       = {10.24963/ijcai.2023/299},
}

@inproceedings{dwork2001rank,
  author    = {Dwork, Cynthia and Kumar, Ravi and Naor, Moni and Sivakumar, D.},
  title     = {Rank aggregation methods for the {W}eb},
  booktitle = {Proceedings of the 10th International Conference on World Wide Web (WWW '01)},
  pages     = {613--622},
  year      = {2001},
  publisher = {ACM}
}

@article{gormley2008exploring,
  title={Exploring voting blocs within the Irish electorate: A mixture modeling approach},
  author={Gormley, Isobel Claire and Murphy, Thomas Brendan},
  journal={Journal of the American Statistical Association},
  volume={103},
  number={483},
  pages={1014--1027},
  year={2008},
  publisher={Taylor \& Francis}
}

@article{gormley2014mixed,
  title={Mixed membership models for rank data: Investigating structure in Irish voting data},
  author={Gormley, Isobel Claire and Murphy, Thomas Brendan},
  journal={Handbook of mixed membership models and their applications},
  pages={441--460},
  year={2014},
  publisher={Chapman and Hall/CRC}
}

@InProceedings{gormley,
author="Gormley, Isobel Claire
and Murphy, Thomas Brendan",
editor="Airoldi, Edoardo
and Blei, David M.
and Fienberg, Stephen E.
and Goldenberg, Anna
and Xing, Eric P.
and Zheng, Alice X.",
title="A Latent Space Model for Rank Data",
booktitle="Statistical Network Analysis: Models, Issues, and New Directions",
year="2007",
publisher="Springer Berlin Heidelberg",
address="Berlin, Heidelberg",
pages="90--102",
isbn="978-3-540-73133-7"
}

@article{DarraghWitten1991,
  author  = {Darragh, John J. and Witten, Ian H.},
  title   = {Adaptive Predictive Text Generation and the Reactive Keyboard},
  journal = {Interacting with Computers},
  volume  = {3},
  number  = {1},
  pages   = {27--50},
  year    = {1991},
  month   = apr,
  doi     = {10.1016/0953-5438(91)90004-L}
}

@article{Shannon1951,
  author  = {Shannon, Claude E.},
  title   = {Prediction and Entropy of Printed {English}},
  journal = {Bell System Technical Journal},
  volume  = {30},
  number  = {1},
  pages   = {50--64},
  year    = {1951},
  month   = jan,
  doi     = {10.1002/j.1538-7305.1951.tb01366.x}
}

@article{Duchin_Tapp_2026, title={Metric geometry for ranking-based voting: Tools for learning electoral structure}, url={http://arxiv.org/abs/2602.10293}, DOI={10.48550/arXiv.2602.10293}, note={arXiv:2602.10293}, number={arXiv:2602.10293}, publisher={arXiv}, author={Duchin, Moon and Tapp, Kristopher}, year={2026}, month=feb }

@article{Szufa_Boehmer_Bredereck_Faliszewski_Niedermeier_Skowron_Slinko_Talmon_2025, title={Drawing a Map of Elections}, url={http://arxiv.org/abs/2504.03809}, DOI={10.48550/arXiv.2504.03809}, note={arXiv:2504.03809}, number={arXiv:2504.03809}, publisher={arXiv}, author={Szufa, Stanisław and Boehmer, Niclas and Bredereck, Robert and Faliszewski, Piotr and Niedermeier, Rolf and Skowron, Piotr and Slinko, Arkadii and Talmon, Nimrod}, year={2025}, month=apr }

@article{Faliszewski_Mertlova_Nunn_Szufa_Was_2026,
  author = {Faliszewski, Piotr and
            Mertlov{\'a}, Jitka and
            Nunn, Pierre and
            Szufa, Stanis{\l}aw and
            W{\k a}s, Tomasz},
  title = {Distances Between Top-Truncated Elections of Different Sizes},
  journal = {arXiv},
  year = {2026},
  month = jan,
  note = {arXiv:2601.17931},
  doi = {10.48550/arXiv.2601.17931},
  url = {http://arxiv.org/abs/2601.17931}
}

@article{Sawadogo_Dossou-Gbete_Lafon_2017,
  title   = {Ties in one block comparison experiments: a generalization of the {Mallows--Bradley--Terry} ranking model},
  volume  = {44},
  number  = {14},
  journal = {Journal of Applied Statistics},
  author  = {Sawadogo, Amadou and Dossou-Gb{\'e}t{\'e}, Simplice and Lafon, Dominique},
  year    = {2017},
  month   = oct,
  pages   = {2621--2644},
  doi     = {10.1080/02664763.2016.1259400}
}

@article{Davidson_1970,
  author  = {Davidson, R. R.},
  title   = {On extending the {Bradley--Terry} model to accommodate ties in paired comparison experiments},
  journal = {Journal of the American Statistical Association},
  volume  = {65},
  year    = {1970},
  pages   = {317--328}
}

@book{Marden_1995,
  author    = {Marden, J. I.},
  title     = {Analyzing and Modeling Rank Data},
  publisher = {Chapman \& Hall},
  address   = {London},
  year      = {1995}
}

@article{Fligner_Verducci_1986,
  author  = {Fligner, M. A. and Verducci, J. S.},
  title   = {Distance based ranking models},
  journal = {Journal of the Royal Statistical Society, Series B (Methodological)},
  volume  = {48},
  number  = {3},
  pages   = {359--369},
  year    = {1986}
}

@article{Fligner_Verducci_1988,
  author  = {Fligner, M. A. and Verducci, J. S.},
  title   = {Multistage ranking models},
  journal = {Journal of the American Statistical Association},
  volume  = {83},
  number  = {403},
  pages   = {892--901},
  year    = {1988}
}

@article{Critchlow_Fligner_Verducci_1991,
  author  = {Critchlow, Douglas E. and Fligner, Michael A. and Verducci, Joseph S.},
  title   = {Probability Models on Rankings},
  journal = {Journal of Mathematical Psychology},
  volume  = {35},
  year    = {1991},
  pages   = {294--318}
}

@article{babingtonsmith1950discussion,
  author  = {Babington Smith, B.},
  title   = {Discussion of {P}rofessor {R}oss's paper},
  journal = {Journal of the Royal Statistical Society, Series B},
  year    = {1950},
  volume  = {12},
  number  = {1},
  pages   = {53--56}
}

@book{Luce_1959,
  author    = {Luce, R. Duncan},
  title     = {Individual Choice Behavior: A Theoretical Analysis},
  publisher = {John Wiley \& Sons},
  address   = {New York},
  year      = {1959}
}

@article{Plackett_1975,
  author  = {Plackett, R. L.},
  title   = {The Analysis of Permutations},
  journal = {Applied Statistics},
  volume  = {24},
  number  = {2},
  year    = {1975},
  pages   = {193--202}
}

@article{Bradley_Terry_1952,
  author  = {Bradley, Ralph Allan and Terry, Milton E.},
  title   = {Rank Analysis of Incomplete Block Designs: {I}. The Method of Paired Comparisons},
  journal = {Biometrika},
  volume  = {39},
  number  = {3/4},
  year    = {1952},
  pages   = {324--345}
}

@article{Mallows_1957,
  author  = {Mallows, C. L.},
  title   = {Non-Null Ranking Models. {I}},
  journal = {Biometrika},
  volume  = {44},
  number  = {1/2},
  year    = {1957},
  pages   = {114--130}
}

@article{Lu_Boutilier_2014,
  author  = {Lu, Tyler and Boutilier, Craig},
  title   = {Effective Sampling and Learning for {Mallows} Models with Pairwise-Preference Data},
  journal = {Journal of Machine Learning Research},
  volume  = {15},
  number  = {117},
  year    = {2014},
  pages   = {3963--4009}
}

@inproceedings{Busse_Orbanz_Buhmann_2007,
  author    = {Busse, Ludwig M. and Orbanz, Peter and Buhmann, Joachim M.},
  title     = {Cluster Analysis of Heterogeneous Rank Data},
  booktitle = {Proceedings of the 24th International Conference on Machine Learning},
  year      = {2007},
  pages     = {113--120},
  doi       = {10.1145/1273496.1273511}
}

@incollection{Beckett_1993,
  author    = {Beckett, Laurel A.},
  title     = {Maximum Likelihood Estimation in {Mallows's} Model Using Partially Ranked Data},
  editor    = {Fligner, Michael A. and Verducci, Joseph S.},
  booktitle = {Probability Models and Statistical Analyses for Ranking Data},
  publisher = {Springer-Verlag},
  address   = {New York},
  year      = {1993},
  pages     = {92--107},
  doi       = {10.1007/978-1-4612-2738-0_6}
}

@inproceedings{Liu_et_al_2019,
  author    = {Liu, Ao and Zhao, Zhibing and Liao, Chao and Lu, Pinyan and Xia, Lirong},
  title     = {Learning {Plackett--Luce} Mixtures from Partial Preferences},
  booktitle = {Proceedings of the AAAI Conference on Artificial Intelligence},
  volume    = {33},
  number    = {1},
  year      = {2019},
  pages     = {4328--4335},
  doi       = {10.1609/aaai.v33i01.33014328}
}

@inproceedings{Hajek_Oh_Xu_2014,
  author    = {Hajek, Bruce and Oh, Sewoong and Xu, Jiaming},
  title     = {Minimax-Optimal Inference from Partial Rankings},
  booktitle = {Advances in Neural Information Processing Systems 27},
  year      = {2014},
  pages     = {1475--1483}
}

@article{Doignon_Pekec_Regenwetter_2004,
  author  = {Doignon, Jean-Paul and Peke{\v{c}}, Aleksandar and Regenwetter, Michel},
  title   = {The Repeated Insertion Model for Rankings: Missing Link between Two Subset Choice Models},
  journal = {Psychometrika},
  volume  = {69},
  number  = {1},
  year    = {2004},
  pages   = {33--54},
  doi     = {10.1007/BF02295838}
}

@misc{Duchin_Tapp_MaxEnt,
  author={Duchin, Moon and Tapp, Kristopher},
  title  = {Maximum entropy methods for election generation},
  note   = {In preparation},
  year   = {2026}
}

@article{jaynes1957information,
  author  = {Jaynes, Edwin T.},
  title   = {Information Theory and Statistical Mechanics},
  journal = {Physical Review},
  volume  = {106},
  number  = {4},
  pages   = {620--630},
  year    = {1957},
  doi     = {10.1103/PhysRev.106.620}
}

@inproceedings{awadelkarim2024topk,
  author    = {Awadelkarim, Amel and Ugander, Johan},
  title     = {Statistical Models of Top-$k$ Partial Orders},
  booktitle = {Proceedings of the 30th ACM SIGKDD Conference on Knowledge Discovery and Data Mining (KDD '24)},
  year      = {2024},
  pages     = {39--48},
  publisher = {ACM},
  doi       = {10.1145/3637528.3672014}
}

@article{hoffman2025truncation,
  author  = {Hoffman, Christina and Kauba, Jakini and Reidy, Julie C. and Weighill, Thomas},
  title   = {Statistical models of ballot truncation in ranked choice elections},
  journal = {Communications in Statistics -- Simulation and Computation},
  volume  = {54},
  number  = {11},
  pages   = {4908--4926},
  year    = {2025},
  doi     = {10.1080/03610918.2024.2397032}
}

@article{diaconis1989spectral,
  author  = {Diaconis, Persi},
  title   = {A Generalization of Spectral Analysis with Application to Ranked Data},
  journal = {The Annals of Statistics},
  volume  = {17},
  number  = {3},
  pages   = {949--979},
  year    = {1989},
  doi     = {10.1214/aos/1176347251}
}

@inproceedings{nguyen2023plackettluce,
  author    = {Nguyen, Duc and Zhang, Anderson Y.},
  title     = {Efficient and Accurate Learning of Mixtures of {P}lackett--{L}uce Models},
  booktitle = {Proceedings of the AAAI Conference on Artificial Intelligence},
  volume    = {37},
  number    = {8},
  pages     = {9294--9301},
  year      = {2023},
  doi       = {10.1609/aaai.v37i8.26114}
}

\clearpage
\appendix
\section{Maximum entropy}\label{sec:maxentropy}
Let $P$ be a profile for which every candidate is mentioned on at least one ballot.  
Let $M=(m_{ij})$ be its succession matrix, and let $Y=(y_{ij})$ be defined as in the proof of Proposition~\ref{prop:steady_state}; that is, $y_{ij}$ is the absolute frequency of the substring $ij$.  We call $Y$ the \emph{candidate-pairs} matrix of $P$.  

In this section, we describe the \MaxEntropy method, which samples ballots from the distribution $\nu_0$ that has maximum Shannon entropy subject to the constraint that its succession matrix equals $M$.  More details can be found in a sequel paper that applies maximum entropy methods to a variety of election profile summaries~\cite{Duchin_Tapp_MaxEnt}.

Since \MaxEntropy will turn out to have a similar formula, we first note that the unnormalized probability that a ballot $\sigma = (A_{i_1}, A_{i_2},\ldots,A_{i_l})$ produced by Autocomplete-\REJ is
\begin{equation}
\P(\sigma) \propto m_{\star i_1}m_{i_1i_2}m_{i_2i_3}\cdots m_{i_l\star}.
\label{E:rej_prob}
\end{equation}
The \REJ method is not faithful to succession probabilities because these probabilities change when conditioned on a raw ballot being valid. The \MaxEntropy probabilities will have the same form, but with the values $m_{ij}$ in Equation~\ref{E:rej_prob} replaced by the entries of a matrix $W=(w_{ij})$ that is tuned so that the resulting probability distribution is faithful to the original succession matrix $M$. However, since $W$ is not necessarily row-stochastic, sampling cannot in general be performed via rejection; instead, we build the resulting distribution directly.

With notation as in Section~\ref{sec:iteration}, For $\nu\in\Delta(\B)$, we will write $M(\nu)$ for the succession matrix of $\nu$, which is denoted $g(\nu)$ in Section~\ref{sec:iteration}.  We additionally define the \emph{candidate pairs} matrix $Y(\nu)$ of $\nu$ so that $Y(\nu)_{ij}$ is the absolute frequency of the sub-string $ij$ in a ballot drawn from $\nu$. 

Since $M$ is the succession matrix of an actual profile, the set 
$$S= \{\nu\in\Delta(\B)\mid M(\nu)=M\}$$ is nonempty and compact.  Our goal is to maximize Shannon entropy in $S$. For this, it is convenient to first re-describe $S$ as 
$$S= \{\nu\in\Delta(\B)\mid Y(\nu)=Y\},$$
This characterization is valid because, according to Proposition~\ref{prop:steady_state}, $Y$ can be determined from $M$.  Conversely, $Y$ also determines $M$, since
$m_{ij}=\frac{y_{ij}}{\mu_i} = \frac{y_{ij}}{\sum_j y_{ij}},$ where $\mu$ is the mentions vector.  Thus, $M(\nu)=M \iff Y(\nu)=Y$.  

This alternative characterization of $S$ makes it easier to see that $S$ is defined by linear equations and is therefore convex.  Since entropy is continuous and strictly concave, it has a unique maximizer on $S$.  Let $\nu_0$ be this maximizer -- the maximum entropy distribution for which the candidate-pairs matrix equals $Y$.  According to~\cite{jaynes1957information}, this distribution has the following simple form.

\begin{proposition}\label{P:lagrange} If $\nu_0$ has full support, then there exists an $(m+1)$-by-$(m+1)$ matrix $W=(w_{ij})$, depending on $Y$, such that for every ballot $\sigma = (A_{i_1}, A_{i_2},...,A_{i_l})$,
$$\P_{\nu_0}(\sigma) \propto w_{\star i_1}\cdot w_{i_1i_2}\cdot w_{i_2i_3} \cdots w_{i_l \star}.$$   
\end{proposition}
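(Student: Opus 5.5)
The plan is a Lagrange multiplier argument on the finite-dimensional simplex of ballot distributions. The key point is that the constraint set $S$ is cut out by linear equations in $\nu$.

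\emph{Reduce to finitely many variables.} The ballot set $\B$ is finite, since there are at most $\sum_{l=1}^m m!/(m-l)!$ ballots. So $\nu\in\Delta(\B)$ is a vector $(\nu(\sigma))_{\sigma\in\B}$ with $\sum_\sigma\nu(\sigma)=1$. For each ballot $\sigma=(A_{i_1},\dots,A_{i_l})$, let $\chi_{ij}(\sigma)\in\{0,1\}$ indicate that $ij$ occurs as a succession in the padded string $\star A_{i_1}\cdots A_{i_l}\star$. Then
\[
Y(\nu)_{ij}=\sum_{\sigma\in\B}\nu(\sigma)\chi_{ij}(\sigma).
\]
By the discussion above, $S=\{\nu : Y(\nu)=Y\}$. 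Hence $S$ is the intersection of the simplex with the affine constraints
\[
\sum_\sigma\nu(\sigma)\chi_{ij}(\sigma)=y_{ij}\quad\text{for all } i,j\in\C^\star, \qquad \sum_\sigma\nu(\sigma)=1.
\]
The positivity condition defining $\Delta(\B)$ is automatic, because $\mu_i=\sum_j y_{ij}>0$.

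\emph{Apply first-order conditions at an interior maximizer.} Because $\nu_0$ has full support, it lies in the relative interior of the simplex. The only active constraints are therefore the equalities above. Entropy $H(\nu)=-\sum_\sigma\nu(\sigma)\log\nu(\sigma)$ is smooth on the open positive orthant, and all constraints are affine, so no constraint qualification issue arises. By the Lagrange multiplier theorem for linear constraints, there are scalars $\lambda_{ij}$ and $\lambda_0$ such that for every $\sigma\in\B$,
\[
-\log\nu_0(\sigma)-1=\lambda_0+\sum_{i,j}\lambda_{ij}\chi_{ij}(\sigma).
\]
This can also be phrased as follows: the gradient of $H$ lies in the span of the constraint gradients, because $\nu_0$ maximizes $H$ on the affine slice and the tangent directions are exactly $\ker$ of the constraint map. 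The multipliers need not be unique, since the constraints are linearly dependent (for example, row sums of $Y$ equal column sums), but we only need existence.

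\emph{Exponentiate to obtain the product form.} Exponentiating gives
\[
\nu_0(\sigma)=e^{-1-\lambda_0}\prod_{i,j}e^{-\lambda_{ij}\chi_{ij}(\sigma)}.
\]
Set $w_{ij}=e^{-\lambda_{ij}}>0$. The successions present in $\sigma$ are exactly $\star i_1, i_1i_2,\dots,i_l\star$, each occurring at most once because a ballot has no repeats. The product therefore equals $w_{\star i_1}w_{i_1i_2}\cdots w_{i_l\star}$, with the constant $e^{-1-\lambda_0}$ absorbed into the proportionality. I would remark that the diagonal entries $w_{ii}$ and $w_{\star\star}$ never appear and can be set arbitrarily, and that $W$ depends only on $Y$ through $\nu_0$.

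The main obstacle is justifying the multiplier step cleanly. One must check that the maximizer of a concave function on a polytope, lying in the relative interior, satisfies stationarity along the affine subspace. This follows because the one-sided directional derivatives in every direction of $\ker$ must vanish: both $\pm v$ are feasible directions, so neither can increase $H$. The step then uses the linear-algebra fact that a linear functional vanishing on the kernel of a linear map lies in the row space of that map. This is exactly where the full-support hypothesis is essential. Without it, boundary maximizers would allow some $\nu_0(\sigma)=0$, and those ballots would require $w$-products equal to zero, which is not consistent with positive weights.
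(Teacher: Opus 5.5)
Your proposal is correct and follows the paper's route: the paper cites Jaynes's maximum-entropy result, and that result rests on exactly the Lagrange-multiplier stationarity argument you write out, with the paper's $w_{ij}=e^{\alpha_{ij}}$ corresponding to your $e^{-\lambda_{ij}}$. Your step showing that $\nu_0\pm tv$ is feasible for every $v$ in the kernel of the affine constraint map is precisely where the full-support hypothesis is used, and it supplies the detail the paper leaves to the citation.
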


To implement the \MaxEntropy method, given the candidate-pairs matrix $Y=Y(P)$, we determine the matrix $W = (w_{ij})$ for which the distribution described by Proposition~\ref{P:lagrange} has candidate-pairs matrix $Y$.  Equivalently, writing $w_{ij}=e^{\alpha_{ij}}$, we choose the parameters $\alpha_{ij}$ to satisfy this constraint.  This is the standard optimization problem associated with maximum entropy under linear constraints.  Its objective is convex, and its gradient records the discrepancies between the desired constraints and their expectations under the current distribution.  We solve this optimization numerically, obtaining a fitted matrix $W$, and then sample ballots from the resulting distribution.  More details are found in~\cite{Duchin_Tapp_MaxEnt}.

In the generic case where the maximum-entropy distribution $\nu_0$ has full support, this implementation of \MaxEntropy draws from $\nu_0$, as advertised.  In the general case, the algorithm still deserves its name.  When $\nu_0$ does not have full support, it is a limit of distributions of this form, with some of the parameters $\alpha$ tending to infinity in magnitude.  Thus our exponential family of distributions includes $\nu_0$ either directly or as a boundary limit. 


\section{Details on ballot statistics}\label{sec:ballotstats}

\begin{proposition}[Reconstitution shrinks ballots]\label{P:ballot_shrinking}
    If $P$ is not ordered, then the expected average ballot length in $\tP$ under Autocomplete with \Terminate, \KeepFirst, or \LERW is strictly less than the average ballot length in $P$.
\end{proposition}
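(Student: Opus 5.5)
The plan is to compare each processed ballot with the raw ballot it comes from, and then use Remark~\ref{rem:shrink} to pin down the expected raw length. A raw ballot is the segment of the sample path between consecutive visits to $\star$. Started at $\star$, it is one excursion of the chain from $\star$ back to $\star$. By Proposition~\ref{prop:steady_state} we have $\pi_\star = 1/(L+1)$. Kac's formula then says the expected return time to $\star$ is $L+1$, so the expected number of candidate symbols in a raw ballot is exactly $L$. Ergodicity is not needed for this step. Every candidate is mentioned, so every candidate is reachable from $\star$; restricted to the closed class containing $\star$, the chain is irreducible and Kac applies. Periodicity does not affect Kac's formula.

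Next I will treat the three rules together by a pathwise comparison. For \Terminate, \KeepFirst and \LERW, the valid ballot produced from a raw ballot $\rho$ has length at most $|\rho|$. Equality holds if and only if $\rho$ contains no repeated candidate name. For \Terminate this is because we truncate at the first repeat. For \KeepFirst, the output length is the number of distinct names, which is smaller than $|\rho|$ exactly when some name repeats. For \LERW, each erasure removes a nonempty loop. In all three cases a repetition forces a strict loss of at least one symbol. Hence $\mathbb{E}|\tilde\sigma| \le \mathbb{E}|\rho| = L$, with strict inequality as soon as a raw ballot with a repeated name occurs with positive probability. The expected average ballot length of $\tP$ equals $\mathbb{E}|\tilde\sigma|$, since $\tP$ consists of i.i.d.\ draws from $\tS$.

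The remaining step, which I expect to be the only real obstacle, is to show that a raw ballot repeats a name with positive probability whenever $P$ is not ordered. I will use the observation stated before Definition~\ref{D:cyclicity}: $P$ is ordered iff $M$ is acyclic. So when $P$ is not ordered, there is a candidate cycle $i_1\to i_2\to\cdots\to i_k\to i_1$ with all $m_{i_t i_{t+1}}>0$. Let $i=i_1$. Since $i$ is mentioned on some ballot of $P$, that ballot supplies a positive-probability path $\star\to\cdots\to i$ through candidates only, because consecutive names on it are successions with positive weight. The raw walk can therefore follow this prefix to $i$ and then traverse the cycle back to $i$ without visiting $\star$. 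This produces a raw ballot in which $i$ appears twice, and its probability is the product of finitely many positive entries of $M$.

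The prefix itself could repeat names only if the ballot did, and a ballot in $P$ does not repeat names. In any case a repeat anywhere in the raw ballot suffices, so no care about the prefix is needed. Combining this with the pathwise inequality gives strict decrease. Along the way I will also note that the expected raw length is finite, which follows from positive recurrence, so that the expectations being compared are legitimate.
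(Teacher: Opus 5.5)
Your proposal is correct and follows the same route as the paper's proof: the expected raw-ballot length is $L$ (via Proposition~\ref{prop:steady_state} and Remark~\ref{rem:shrink}), each of the three rules strictly shortens any raw ballot containing a repeated name, and non-orderedness yields a candidate cycle that produces such a repeat with positive probability. What you add is detail the paper's two-line proof leaves implicit: Kac's formula needing only irreducibility, reachability of the cycle from $\star$ through a ballot in $P$, and the per-rule strict loss of at least one symbol.
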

\begin{proof}
    If $P$ is not ordered, then there is a positive probability that a raw ballot will have a repeated symbol.  Recall that the expected raw ballot length equals the average ballot length in $P$, so the expected processed ballot length will be strictly shorter.
\end{proof}

With the \REJ and \SARW generation methods, length shrinkage holds for typical profiles, but not for all profiles.
\begin{example}
    Consider the profile $P$ containing equal portions of the ballots
    $$\{(A_1), (A_2), (A_1,A_2), (A_2,A_1), (A_3,A_4,\cdots, A_{k+3})\}$$ for large $k$, and generate new ballots by Autocomplete-\REJ.  The expected length of a \emph{rejected} raw ballot is smaller than that of an \emph{accepted} raw ballot because the rejected ones contain only the candidates $A_1$ and $A_2$.  It follows that the expected average ballot length in $\tP$ is larger than the average ballot length in $P$.
\end{example}

\begin{example}
Consider the profile consisting of $19$ copies of the ballot $(A_3)$, three
copies of $(A_3,A_4,A_1,A_2)$, and one copy of $(A_4,A_3,A_2)$. The average
ballot length of the original profile is $\frac{34}{23}\approx 1.47826$,
whereas the expected ballot length under Autocomplete-\SARW is
$\frac{63163}{42320}\approx 1.49251$.  The intuition is that walks beginning with $\star 3 4$ cannot return to $3$ and are thus forced onto the longer path $...12...$, which slightly more than compensates for the increased probabilities of stopping.
\end{example}

Recall that $T_i$ is the stopping weight of candidate $i$, which is the probability that they are placed last (when ranked at all).  
Let $\tilde{T}_i$ denote the stopping weight of $f_R(M)$; that is, the conditional probability that a ballot generated by $M$ via method $R$ places candidate $i$ last, given that it ranks $i$.

\begin{proposition}\label{P:increase_stop_weights}
(Reconstitution increases stopping weights.)
For each of the five random-walk methods
\REJ, \Terminate, \KeepFirst, \LERW, and \SARW,
$\widetilde T_i \geq T_i$
for every candidate $i$.
Moreover, if the profile $P$ is full-circulation (or more generally if its succession matrix is 2-sided and candidate $i$ belongs to a positive-weight candidate cycle), then the inequality is strict: $\widetilde T_i > T_i$.
\end{proposition}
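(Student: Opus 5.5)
The plan is to treat each of the five random-walk rules separately. The common device is the Markov property at a visit to $i$: whenever the raw walk (or the self-avoiding walk) sits at candidate $i$, the next step is $\star$ with probability exactly $T_i=m_{i\star}$, independently of the past. So $\P(i\text{ is last})$ equals $T_i$ times the probability of an appropriate ``arrival at $i$'' event. In every case the probability that $i$ is ranked is at most the probability of that arrival event, which gives $\widetilde T_i\ge T_i$. Strictness will always come from one explicit positive-probability path. If $i$ lies on a positive-weight candidate cycle $i\to j\to\cdots\to i$ and $M$ is two-sided, the walk can go $\star\to j$ (since $m_{\star j}>0$), follow the cycle to $i$, and then propose $i\to j$ (since $m_{ij}>0$). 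This produces a repeat right after $i$, and $j$ can then stop (since $T_j>0$).

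\textbf{\Terminate} and \textbf{\SARW}. Each generated ballot contains at most one occurrence of $i$, so we condition on the walk reaching $i$ before termination. Under \Terminate, $i$ is last if the next raw step is $\star$ (probability $T_i$) or is a repeated candidate. The second event has positive probability via the cycle path above, so we get $\widetilde T_i\ge T_i$, with strict inequality. Under \SARW the next step is $\star$ with probability $T_i/(T_i+\sum_{j\text{ unvisited}}m_{ij})\ge T_i$. The inequality is strict whenever some $j$ with $m_{ij}>0$ was already visited, which again happens along the cycle path. Averaging over the arrival event gives the claim.

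\textbf{\LERW} and \textbf{\KeepFirst}. Let $N_i$ be the number of visits to $i$ in a raw ballot. Under \LERW, $i$ is retained and last exactly when some visit to $i$ is followed immediately by $\star$. Such a visit is necessarily the final one, and nothing after it can erase $i$. By linearity and the Markov property,
\[
\P(i\text{ last})=\mathbb{E}[\#\{\text{visits to } i \text{ followed by } \star\}]=T_i\,\mathbb{E}[N_i],
\]
while $\P(i\text{ retained})\le\P(N_i\ge1)\le\mathbb{E}[N_i]$. Hence $\widetilde T_i=T_i\,\mathbb{E}[N_i]/\P(i\text{ retained})\ge T_i$. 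The last inequality is strict because $N_i\ge 2$ with positive probability: go around a cycle through $i$ and back. For \KeepFirst the argument is more direct. We condition on the first visit to $i$, which is a stopping time. The event ``next step is $\star$'' already makes $i$ last, with probability $T_i$. The extra event ``next step is to an earlier-seen $j$, then $\star$'' adds positive probability, again via the cycle path.

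\textbf{\REJ}. Write $V_i$ for the event that the raw ballot reaches $i$ with a valid (repeat-free) prefix. Then
\[
\P(i\text{ last, valid})=T_i\,\P(V_i),\qquad \P(i\text{ ranked, valid})=T_i\,\P(V_i)+\P(V_i,\ \text{continues, stays valid})\le\P(V_i),
\]
so $\widetilde T_i\ge T_i$. The inequality is strict because continuation to an already-used $j$ occurs with positive probability and produces rejection. I expect the main obstacle to be \LERW. There the retained occurrence of $i$ is determined by the future of the walk, so one cannot condition naively on ``the visit to $i$ that survives.'' The counting identity via $\mathbb{E}[N_i]$ is what sidesteps this, and I would check its details carefully, in particular that a visit followed by $\star$ is never later erased.
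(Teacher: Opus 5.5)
Your proposal is correct and follows the paper's proof essentially case by case. For \Terminate, \SARW, and \KeepFirst you condition on the repeat-free (or first) arrival at $i$. For \REJ you use the same acceptance-conditioning bound. For \LERW you use the same visit-counting identity $\P(i\text{ last})=T_i\,\mathbb{E}[N_i]\geq T_i\,\P(i\text{ retained})$. Strictness comes from the same path that enters the cycle just after $i$ and then tries to repeat $i$'s cycle-successor.

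The only cosmetic difference is that the paper identifies $\mathbb{E}[N_i]$ with $\mu_i$ via Proposition~\ref{prop:steady_state}, and your argument does not need that identification.
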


\begin{proof}
Fix a candidate $i$.

For \Terminate, condition on a duplication-free walk prefix ending at $i$.
The ballot ends at $i$ either if the next step is $\star$, which has
probability $T_i$, or if the next step is a previously visited candidate.
Thus the conditional probability of ending at $i$ is at least $T_i$,
and hence $\widetilde T_i\geq T_i$.

For \SARW, again condition on a self-avoiding prefix ending at $i$.
The weights of edges from $i$ to previously visited candidates are
discarded and the remaining weights are renormalized.  In particular,
the stopping weight $T_i$ can only increase, so
$\widetilde T_i\geq T_i$.

For \KeepFirst, condition on the first visit to $i$.  The event that the
next step is $\star$ has probability $T_i$ and certainly causes the
generated ballot to end at $i$.  There may additionally be paths that
visit only previously seen candidates before reaching $\star$, and these
also produce ballots ending at $i$.  Hence
$\widetilde T_i\geq T_i$.

For \REJ, condition on a duplication-free prefix ending at $i$, and
then condition further on the eventual raw ballot being valid.  If the
next step is $\star$, acceptance is certain, whereas a transition to
another candidate has acceptance probability at most $1$.  Conditioning
on acceptance therefore weakly favors the transition $i\to\star$, so
again $\widetilde T_i\geq T_i$.

Finally, consider \LERW.  Let $N_i$ be the number of visits to $i$ in a
raw ballot.  By Proposition~\ref{prop:steady_state}, $\mathbb E[N_i]=\mu_i$.
Loop erasure preserves the endpoint of the raw ballot, so
\[
    \Pr(\text{\LERW ballot ends at }i)=\mu_iT_i.
\]
If $A_i$ is the event that $i$ survives loop erasure, then
\[
    \widetilde T_i
      = \frac{\mu_iT_i}{\Pr(A_i)}.
\]
Since $A_i$ implies that the raw walk visits $i$,
\[
    \Pr(A_i)\leq \Pr(N_i\geq1)\leq\mathbb E[N_i]=\mu_i,
\]
and therefore $\widetilde T_i\geq T_i$.

It remains to prove strictness under the hypothesis that the profile is two-sided and $i$ is on a cycle.
Fix a positive candidate cycle through $i$.

For \REJ, start the walk at $i$ and traverse the cycle back to $i$.
This gives a positive-probability continuation that is rejected, while
$i\to\star$ is accepted, so conditioning on acceptance strictly
increases the stopping probability at $i$.

For \LERW, the same walk visits $i$ at least twice with positive
probability.  Hence
\[
    \Pr(A_i)<\mathbb E[N_i]=\mu_i,
\]
and, since $T_i>0$, the displayed formula above gives
$\widetilde T_i>T_i$.

For each of \Terminate, \SARW, and \KeepFirst, choose a substring
$ij$ on the cycle and begin instead at $j$, then follow the remainder
of the cycle until reaching $i$.  This occurs with positive probability
by the circulation hypothesis, and when $i$ is reached, $j$ has already been
visited.  Thus the occurrence of $ij$ gives additional stopping probability
under \Terminate and is removed and renormalized away under \SARW,
making the inequality strict.  Under \KeepFirst, after seeing
$ij$, the repeated $j$ is not recorded; following with
$j\star$ (which again has positive probability by the circulation hypothesis), produces a ballot ending at $i$ without using the transition $i\star$.
Thus the inequality is strict for \KeepFirst as well.
\end{proof}

\section{Details on empirical comparisons}
\label{sec:empirical-details}

Figure~\ref{fig:dist-to-P} (top) demonstrates that for the purposes of reproducing $P$ itself, succession methods perform strongly.  All the succession methods have a median performance that is stronger than position- or H2H-based regeneration, with \MaxEntropy attaining particularly close matches.

Figure~\ref{fig:dist-to-P} (bottom) is similar but replaces the earth-mover distance with the \emph{positionwise distance} between $P$ and $\tP$, as defined in~\cite{Szufa_Boehmer_Bredereck_Faliszewski_Niedermeier_Skowron_Slinko_Talmon_2025} for profiles of complete ballots and generalized in~\cite{Faliszewski_Mertlova_Nunn_Szufa_Was_2026} to general profiles.  This is defined by first constructing the \emph{frequency matrices} of $P$ and $\tP$, which means the $m$-by-$m$ matrix whose $(i,j)$-entry reports the probability the $j^\text{th}$ candidate is ranked in the $i^\text{th}$ position by a random completion of a random ballot from the profile.  The positionwise distance is then defined as the sum over candidates $i$ of the Wasserstein distance between the columns of the frequency matrices of $P$ and $\tP$ corresponding to that candidate.

The \MaxEntropy method outperforms the positional method even when the measure of differences is based on positional frequencies.  

\begin{figure}[htb!]
    \centering
    \includegraphics[width=4.4in]{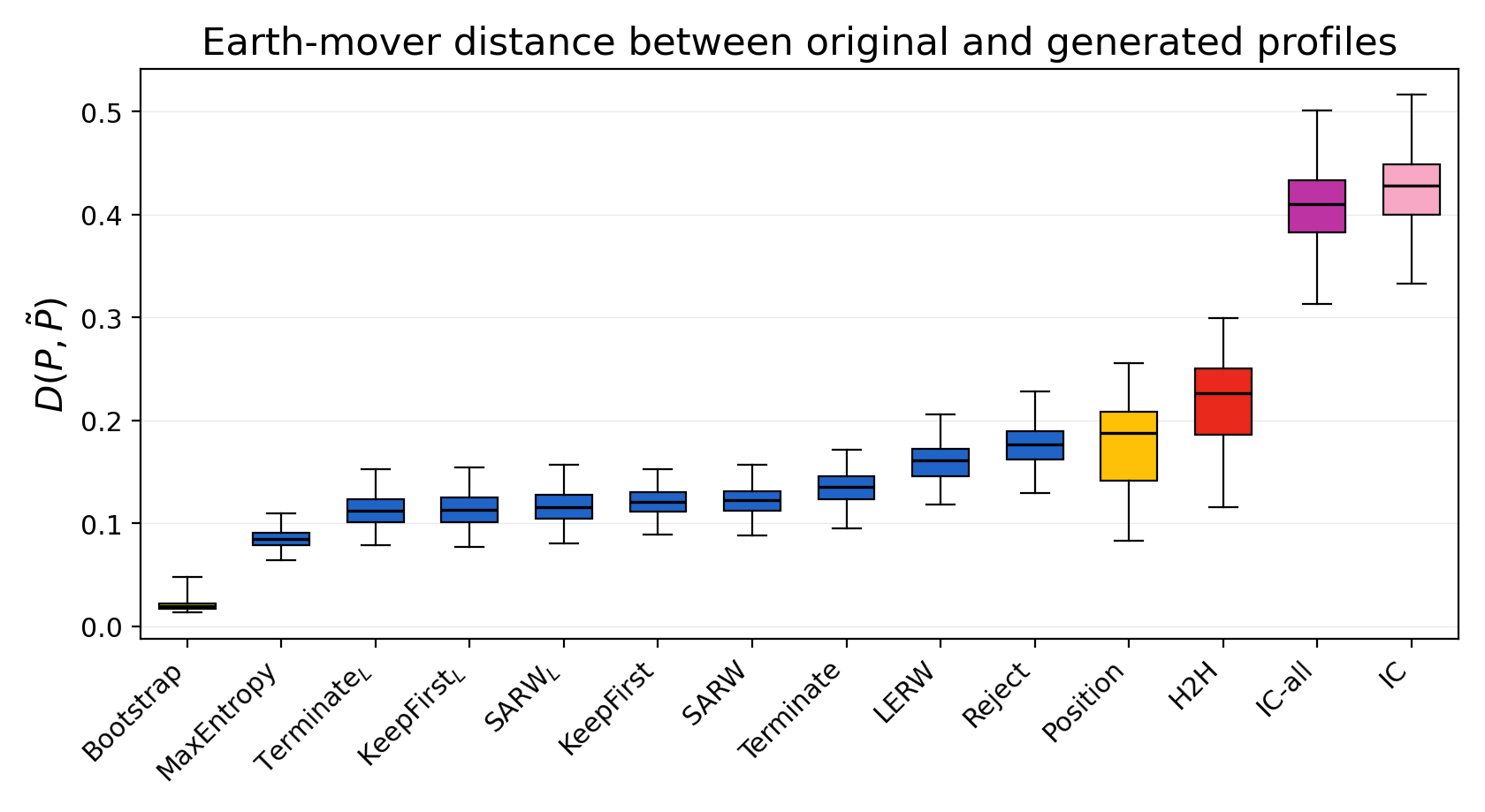}
    \includegraphics[width=4.4in]{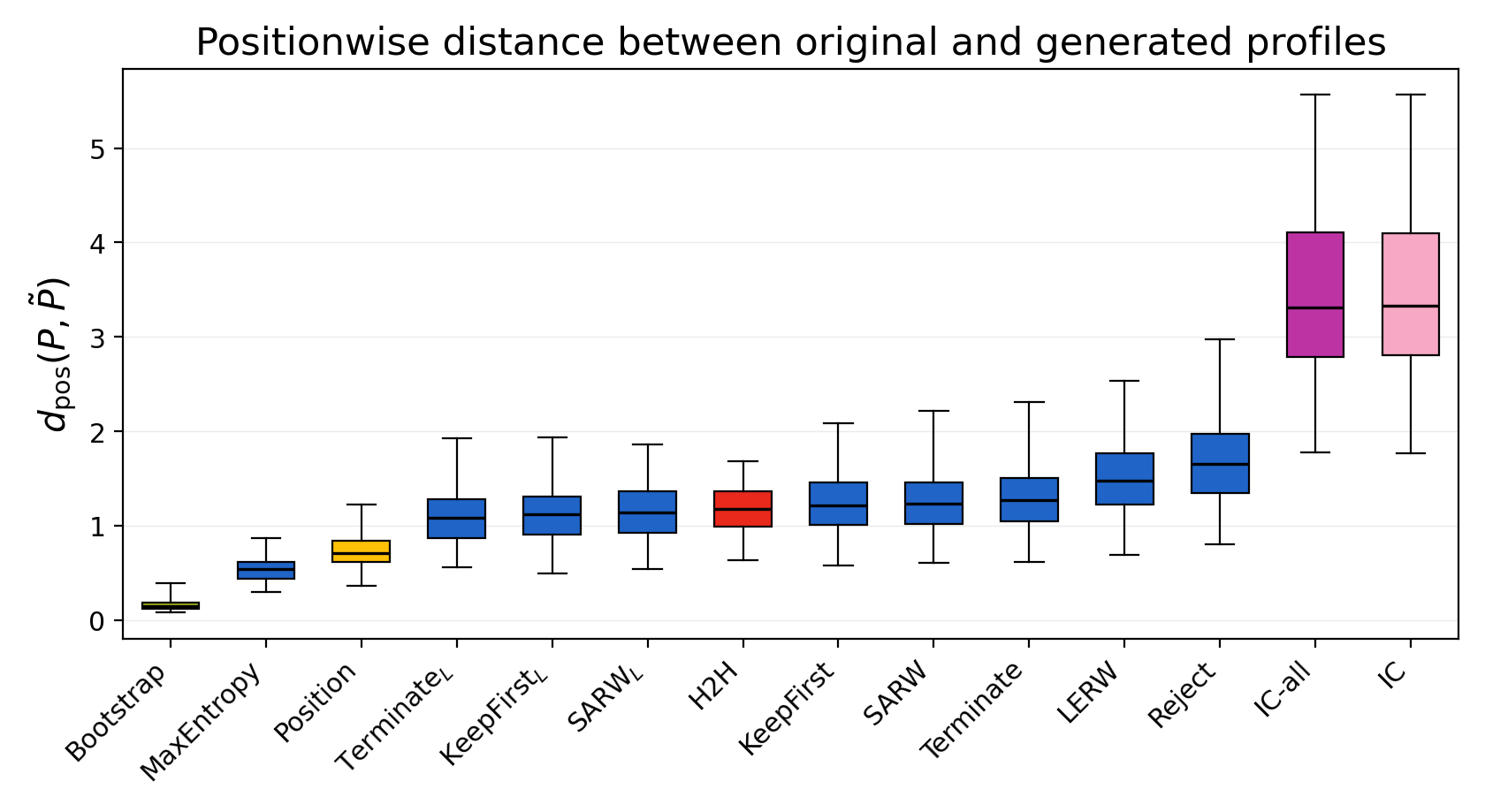}
    \caption{The distribution of $D(P,\tP)$ (top) and positional distance (bottom) over the 282 seven-candidate Scottish elections. In each plot, the models are ordered by increasing medians.  Boxes show the interquartile range, the center line is the median, and whiskers show the $1^\text{st}$ and $99^\text{th}$ percentiles.}
    \label{fig:dist-to-P}
\end{figure}

\FloatBarrier

Figure~\ref{fig:STV-k3} verifies that the 3-winner results are very similar to 4-winner results previously shown in Figure~\ref{fig:STV-k4}.
\begin{figure}[htb!]
    \centering
    \includegraphics[width=4.2in]{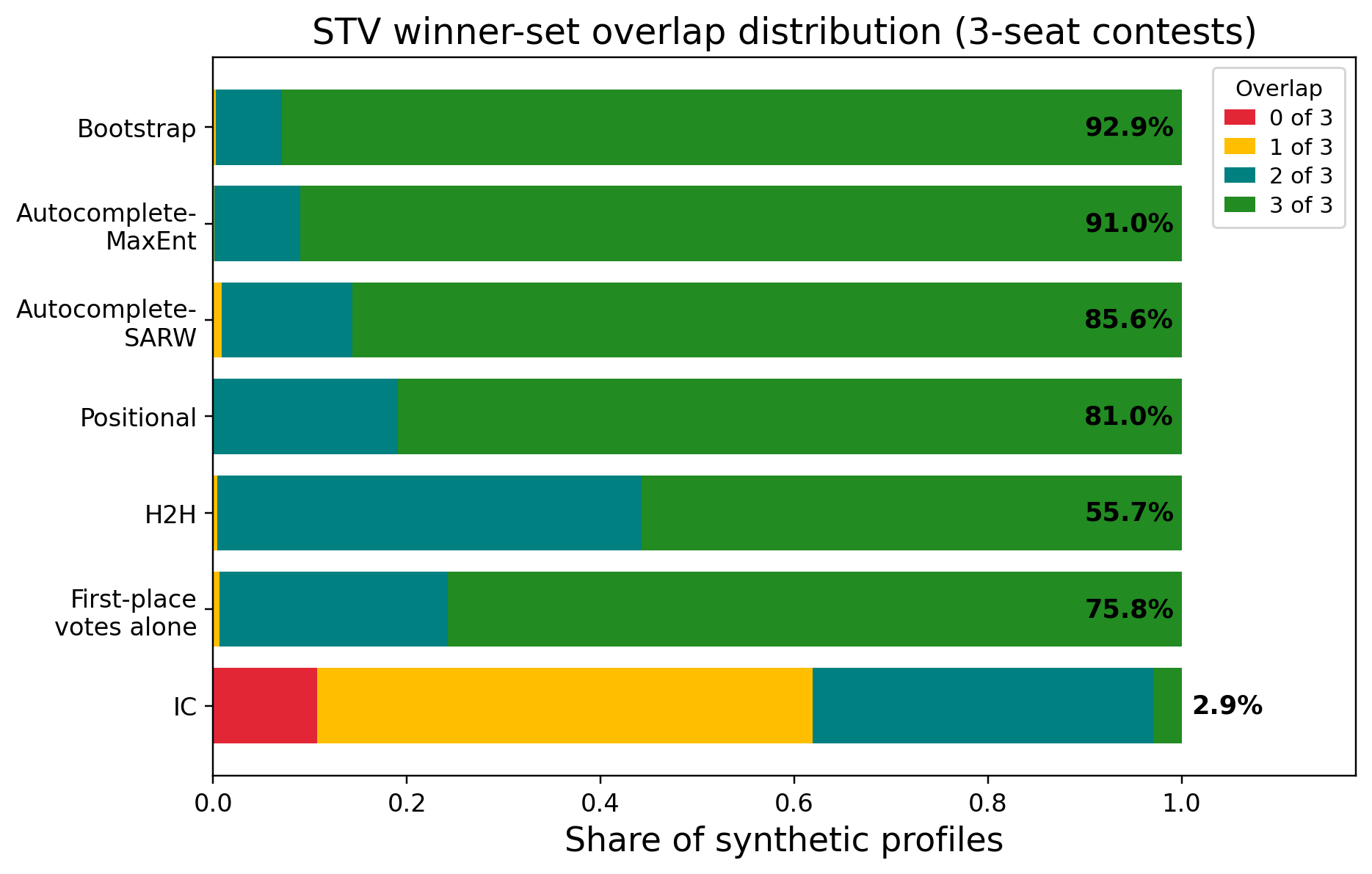}
    \caption{If the reconstituted preferences are fed to an STV voting rule, how much does the winner-set overlap with the original $f(P)$?  This figure shows the results for the 157 Scottish contests with 7 candidates and 3 winners.}
    \label{fig:STV-k3}
\end{figure}

\section{Details on iteration dynamics}\label{sec:iteration-details}

This appendix gives the setup, statements, and proofs behind Theorem~\ref{thm:iteration} and Proposition~\ref{prop:twocand}, followed by the computed limits for our highlighted elections.

\subsection{Setup}\label{sec:iteration_dynamics}

As in Section~\ref{sec:iteration}, for a given ballot generation rule $R$, we have maps as follows:
$\begin{tikzcd}
\Delta(\B)
  \arrow[r, bend left=20, "g"]
&
\M.
  \arrow[l, bend left=20, "f_R"]
\end{tikzcd}$
We will write $f$ when the rule $R$ is fixed, and $\Phi_R = g\circ f_R:\M\to\M$ for the induced map on matrices.

Except in Appendix~\ref{sec:iteration-reject}, we assume throughout this section that $$R\in\{\Terminate,\SARW, \LERW, \KeepFirst\}.$$  Most results are also true of \REJ, as discussed in Appendix~\ref{sec:iteration-reject}.  \MaxEntropy is excluded because its dynamics are trivial: it is faithful to the succession matrix by construction, so $\Phi_{\MaxEntropy}$ is the identity.  The interesting dynamics are generated precisely by the lossiness of the cheaper reconstitution rules.

\begin{remark}
$\Delta(\B)$ has $O(m!)$ parameters, while $\M$ has  $O(m^2)$, so many measures are not achieved by transition matrices.  An example  is any measure whose support contains only the ballots $(A_1,A_2,A_3)$ and $(A_2)$.  For any matrix $M$ to be compatible with this measure, all of the weights $m_{\star 1}$, $m_{12}$, and $m_{2\star}$ will need to be positive, but this means that some $(A_1,A_2)$ ballots will occur, no matter the choice of processing rule $R$.
\end{remark}

\begin{definition}\label{def:fixed-twosided}
A probability distribution $\nu\in\Delta(\B)$ is fixed under iteration if $f\circ g(\nu)=\nu$.  Likewise, a transition matrix is fixed under iteration if $\Phi_R(M)=M$.

Recall from \S\ref{sec:prelims} that $M$ is {\em acyclic} if it has no candidate cycle.  A distribution $\nu\in\Delta(\B)$ is {\em ordered} if there exists a (possibly not unique) total order of the candidates so that every ballot in the support of $\nu$ respects that order.  Recall from \S\ref{sec:iteration} that $M$ is {\em two-sided} if $m_{\star i}>0$ and $m_{i \star}=T_i>0$ for every candidate $i$.  Two-sidedness ensures that $f_R(M)$ ranks every candidate with positive probability, so that $\Phi_R(M)$ is defined; the succession matrix of a full-circulation profile is two-sided. Two-sidedness is also fixed under iteration, as we will see below, while the full circulation property is not.
\end{definition}

\begin{remark}\label{R:closure} Let $M\in\M$, $\nu\in\Delta(\B)$, $R$ be one of the four processing rules under consideration.
\begin{enumerate}
    \item If $M$ is acyclic, then all processing rules are equivalent.
    \item $M$ is acyclic iff $f_R(M)$ is ordered, and $\nu$ is ordered iff $g(\nu)$ is acyclic.
    \item If $M$ is acyclic, then $\Phi_R(M)$ is also acyclic.
    \item If $\nu$ is ordered with respect to a permutation $\tau$, then $f_R\circ g(\nu)$ is also ordered with respect to $\tau$.

\end{enumerate}
\end{remark}

\subsection{Fixed points and convergence}

\begin{proposition}[Invariance and monotonicity]\label{P:invariants}
If $M$ is two-sided and $M'=\Phi_R(M)$, then:
\begin{enumerate}
    \item $M'$ is two-sided, with the same $\star$ row as $M$;
    \item $T_i(M')\geq T_i(M)$ for every candidate $i$; and
    \item $T_i(M')>T_i(M)$ for every candidate $i$ that lies on a positive-weight candidate cycle in $M$.
\end{enumerate}
\end{proposition}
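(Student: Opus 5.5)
Parts 2 and 3 are essentially Proposition~\ref{P:increase_stop_weights}, applied to the ballot distribution $f_R(M)$ rather than to a profile $P$. That proposition is phrased for the succession matrix of a profile, but its proof uses only three facts about $M$: it is row-stochastic, it is two-sided, and there is a positive-weight cycle through $i$. So I would first note that the same argument applies verbatim to an arbitrary two-sided $M\in\M$. Then $T_i(M')=\widetilde T_i$ is exactly the stopping weight of $f_R(M)$, and this gives $T_i(M')\ge T_i(M)$, strictly when $i$ lies on a candidate cycle. For the strictness argument under \Terminate, \SARW and \KeepFirst, the proof starts the walk at $\star$, enters the cycle at $j$, and later takes a final $j\star$ step. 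This uses $m_{\star j}>0$ and $T_j>0$ for every $j$, and these are precisely what two-sidedness supplies.

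For part 1, I would first show that the $\star$ row is preserved. Each of the four rules begins with a step from $\star$ drawn from row $\star$ of $M$. That first candidate is always kept: under \Terminate and \KeepFirst it is trivially the first entry, under \SARW no earlier names exist, and under \LERW loop erasure never deletes the first symbol because it deletes only the portion after an earlier occurrence. Hence $\P_{f_R(M)}(\text{ballot starts with } i)=m_{\star i}$. Since every ballot mentions $\star$, $g$ returns exactly these numbers as $m'_{\star i}$. This already shows $m'_{\star i}=m_{\star i}>0$, so every candidate is mentioned with positive probability and $g$ is defined. Combining with part 2 gives $T_i(M')\ge T_i(M)>0$, so $M'$ is two-sided.

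The main point needing care is the transfer of the \LERW case of Proposition~\ref{P:increase_stop_weights}. There, $\mathbb E[N_i]=\mu_i$ was derived from Proposition~\ref{prop:steady_state}, which assumed $M$ came from a profile. For a general two-sided $M$, I would instead \emph{define} $\mu_i:=\mathbb E[N_i]$ for the raw walk, which is finite because $T_j>0$ makes $\star$ reachable from everywhere. I would then redo the computation $\P(\text{ends at } i)=\mathbb E[N_i]\,T_i$ directly: each visit to $i$ is followed by $\star$ with probability $T_i$, and loop erasure preserves the endpoint. The inequality $\P(A_i)\le\P(N_i\ge1)\le\mathbb E[N_i]$ then goes through unchanged, and it is strict when a cycle through $i$ is traversable with positive probability. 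This traversal is guaranteed because $m_{\star i}>0$ lets the walk reach $i$ first.
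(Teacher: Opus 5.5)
Your proposal is correct and follows the paper's proof: the $\star$ row is preserved because every rule keeps the first symbol of the raw walk, and parts 2--3 are Proposition~\ref{P:increase_stop_weights} applied to $f_R(M)$. The differences are minor. The paper gets two-sidedness of $M'$ directly from the positive probability $m_{\star i}T_i$ of each bullet vote, whereas you get it from part 2. Your observation that the \LERW step needs only $\mathbb E[N_i]$, and not its identification with $\mu_i$ via Proposition~\ref{prop:steady_state}, is a welcome tightening of a point the paper leaves implicit when it applies that proposition to a general two-sided $M$.
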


\begin{proof}
Each of the four rules retains the first symbol of the raw walk, so $m_{\star i}' = m_{\star i}$ for every $i$.
Since $M$ is two-sided, the bullet vote for any candidate has positive probability, and hence $M'$ is two-sided.  The remaining claims follow from Proposition~\ref{P:increase_stop_weights}.
\end{proof}

In particular, this means that all stopping weights strictly increase for a full-circulation profile.

\begin{proposition}[Fixed points]\label{prop:fixed}
If $M\in\M$ is two-sided, then $M$ is fixed under iteration if and only if $M$ is acyclic.
\end{proposition}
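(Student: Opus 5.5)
Both directions rest on earlier results: Proposition~\ref{P:invariants} for the forward implication, and Remark~\ref{R:closure} together with Proposition~\ref{prop:steady_state} for the converse.

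\emph{Fixed implies acyclic.} Suppose $M$ is two-sided and $\Phi_R(M)=M$, and assume for contradiction that $M$ has a positive-weight candidate cycle. Pick a candidate $i$ on that cycle. Proposition~\ref{P:invariants}(3) gives $T_i(\Phi_R(M))>T_i(M)$. This contradicts $\Phi_R(M)=M$, since the two matrices have the same $(i,\star)$ entry. So this direction is immediate.

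\emph{Acyclic implies fixed.} Suppose $M$ is acyclic and two-sided. Then no sample path of the random walk can repeat a candidate before returning to $\star$, because a repeat would trace a positive-weight candidate cycle. Hence every raw ballot is already valid, and all four rules agree with the raw walk (Remark~\ref{R:closure}(1)). So $\nu=f_R(M)$ is exactly the raw-ballot distribution: $\P_\nu(\sigma)=m_{\star i_1}m_{i_1i_2}\cdots m_{i_l\star}$, with no renormalization needed.

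It remains to show $g(\nu)=M$, i.e.\ $\P_\nu(ij\in\sigma\mid i\in\sigma)=m_{ij}$ for every $i$ and $j$.
\begin{itemize}
\item Since $i$ appears at most once on each ballot, $\P_\nu(i\in\sigma)$ equals the expected number of visits to $i$ before the walk returns to $\star$.
\item By the Markov property at the (unique) visit to $i$, the next state is $j$ with probability $m_{ij}$, independently of the past. Therefore $\P_\nu(ij\in\sigma)=m_{ij}\,\P_\nu(i\in\sigma)$.
\item Dividing gives $m'_{ij}=m_{ij}$ for all $i\in\C$ and $j\in\C^\star$.
\item For the $\star$ row, the first step of the walk is drawn from $m_{\star\cdot}$ directly, so this row is preserved (as already noted in Proposition~\ref{P:invariants}(1)).
\end{itemize}

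The quantity $\P_\nu(i\in\sigma)$ must be positive for the conditional probabilities to be defined. Two-sidedness gives this, since $m_{\star i}>0$.

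The chain must also actually return to $\star$, so that ballots are finite. Because $M$ is acyclic, the candidate-only walk can take at most $m$ steps before it is forced to $\star$: every path through candidates is self-avoiding, and an acyclic digraph on $m$ nodes has no path longer than $m$. So termination is automatic.

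The main point needing care is the conditioning argument in the converse: that the conditional succession frequency under $\nu$ equals $m_{ij}$. This uses exactly that visits to $i$ are unique, so the "given $i$ is ranked" event coincides with "the walk visits $i$ once." With repeats (the cyclic case) that identification fails, which is precisely why the forward direction produces a change.
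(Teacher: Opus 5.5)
Your proposal is correct and follows the paper's proof. The forward direction is the same contrapositive via the strict stopping-weight increase in Proposition~\ref{P:invariants}(3), and the converse rests on the same observation that acyclicity makes every raw ballot valid, so $f_R(M)$ is the product-of-weights distribution for every rule. You also fill in the step the paper leaves implicit, namely the Markov property at the unique visit to $i$ giving $\P_\nu(ij\in\sigma)=m_{ij}\,\P_\nu(i\in\sigma)$, and you check positivity of $\P_\nu(i\in\sigma)$ and finiteness of ballots; these additions are sound.
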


\begin{proof}
If $M$ is not acyclic, then $\Phi_R(M)\neq M$ by Proposition~\ref{P:invariants}(3).
Conversely, assume that $M$ is acyclic.  The probability of a given ballot $\sigma$ in $\nu=f_R(M)$ is just the product of its sequence of weights (regardless of the rule $R$), because no loops means no reductions.

This implies that, for any pair of candidates $i,j$, the conditional probability in $\nu$ that candidate $j$ immediately follows $i$ on a ballot, given that the ballot contains $i$, is equal to $m_{ij}$.  This is exactly the weight of the edge from $i$ to $j$ in $\Phi_R(M)$.
\end{proof}

An ordered distribution, by contrast, might need one step of iteration before it becomes fixed.
That is, if $\nu$ is ordered, then $g(\nu)$ is acyclic, so $f\circ g(\nu)$ is fixed under iteration. The following example shows that this one extra step may be needed.

\begin{example}\label{Ex:delay}
Consider a distribution $\nu$ with half of the probability weight on the bullet vote $(A_3)$ and half on the complete ballot $(A_1,A_2,A_3,A_4,A_5)$.  Though this is ordered, it is not fixed, because $f\circ g(\nu)$ is equally supported on $(A_3), (A_1,A_2,A_3), (A_3,A_4,A_5)$, and $(A_1,A_2,A_3,A_4,A_5)$.
\end{example}

Write $M_n = \Phi_R^n(M_0)$ for the orbit points, starting from a two-sided matrix $M_0$.  Since the stopping weights can only increase along the orbit and are bounded, they converge.

\begin{proposition}[Limit points are acyclic]\label{prop:limits-acyclic}
If $M_0$ is two-sided, then every limit point of the orbit $(M_n)$ is an acyclic matrix fixed by $\Phi_R$.  Moreover, if $R=\Terminate$, then the orbit has a unique limit.
\end{proposition}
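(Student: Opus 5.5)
The plan is a compactness argument built on the strict monotonicity in Proposition~\ref{P:invariants}, followed by a coordinatewise-monotonicity argument specific to \Terminate that gives uniqueness.

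\emph{Step 1: the orbit stays in a compact set on which $\Phi_R$ is continuous.} By Proposition~\ref{P:invariants}, every $M_n$ is two-sided and has the same $\star$ row as $M_0$. Its stopping weights satisfy $T_i(M_n)\geq T_i(M_0)=:\tau_i>0$. Let $K$ be the set of row-stochastic matrices with this fixed $\star$ row and with $m_{i\star}\ge\tau_i$ for every candidate $i$. Then $K$ is closed and bounded, hence compact, and it contains the whole orbit. Every matrix in $K$ is two-sided, so limit points are two-sided as well.

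I would then check that $\Phi_R=g\circ f_R$ is continuous on $K$.
\begin{itemize}
\item For each fixed ballot $\sigma$, $f_R(M)(\sigma)$ is a sum over the raw walks that produce $\sigma$ of products of entries of $M$.
\item For \Terminate and \SARW this is a finite sum, so continuity is immediate.
\item For \KeepFirst and \LERW the raw walk can be arbitrarily long. At every step, however, it exits to $\star$ with probability at least $\min_i\tau_i>0$. So the tail of raw-walk length decays geometrically, uniformly on $K$, and the series converges uniformly to a continuous function.
\item The map $g$ is a ratio of continuous functions of $f_R(M)$. Its denominators are the mention probabilities, each at least the bullet-vote probability $m_{\star i}\tau_i>0$, so $g$ is continuous on $f_R(K)$.
\end{itemize}
I expect this uniform-convergence check to be the main technical obstacle, modest as it is.

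\emph{Step 2: every limit point is acyclic and fixed.} For each $i$, the sequence $T_i(M_n)$ is nondecreasing and bounded by $1$, so it converges to some $T_i^*$. Let $M^*=\lim_k M_{n_k}$ be a limit point. Then $T_i(M^*)=T_i^*$. By continuity, $\Phi_R(M^*)=\lim_k M_{n_k+1}$, and its stopping weights are again $T_i^*$, because the monotone sequence has a single limit. Hence $T_i(\Phi_R(M^*))=T_i(M^*)$ for every $i$. Proposition~\ref{P:invariants}(3), applied to the two-sided matrix $M^*$, says this equality fails for any candidate on a positive-weight candidate cycle. So $M^*$ has no candidate cycle, i.e.\ it is acyclic, and Proposition~\ref{prop:fixed} shows it is fixed. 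The equivalence with ``every generated profile is ordered'' is Remark~\ref{R:closure}(2).

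\emph{Step 3: uniqueness for \Terminate.} I would show that every entry of $M_n$ is monotone in $n$, so the whole sequence converges. The $\star$ row is constant, and the $\star$ column is nondecreasing by Step~1. For candidates $i\neq j$, condition on a duplication-free prefix that ends at $i$. Under \Terminate, the succession $ij$ appears in the ballot exactly when the next step is $j$ and $j$ has not yet been visited; this has conditional probability $m_{ij}\mathbf 1[j\notin\text{prefix}]\le m_{ij}$. Averaging over the prefixes that pass through $i$ gives $m'_{ij}\le m_{ij}$. So the candidate-to-candidate entries are nonincreasing and bounded below by $0$. Every coordinate of $M_n$ is therefore monotone and bounded, hence convergent, and the orbit has a unique limit.
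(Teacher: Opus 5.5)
Your proposal is correct and follows essentially the paper's route: continuity of $\Phi_R$ where the stopping weights are bounded below by those of $M_0$ (a uniform geometric tail handles \KeepFirst and \LERW), then monotone convergence of the stopping weights plus the strict increase on cycles from Proposition~\ref{P:invariants}(3) to rule out cycles at limit points, then coordinatewise monotonicity for \Terminate. The differences are cosmetic: you track each $T_i$ separately rather than the sum $S=\sum_i T_i$, you spell out the prefix-conditioning argument for $m'_{ij}\le m_{ij}$ that the paper leaves as straightforward, and one small slip does not affect continuity (for \SARW the ballot probability is a product of renormalized ratios, i.e.\ a rational function with denominators at least $\tau_i$, rather than a finite sum of products).
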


\begin{proof}
We first claim that the map $\Phi_R=g\circ f_R$ is continuous on the region
$$\tilde\M = \{M\in\M\mid M \text{ is two-sided, and } T_i \geq T_i(M_0) \text{ for all } i\}.$$   The map $g:\Delta(\B)\rightarrow \M$ is continuous where it is defined, which is on the set of distributions for which each candidate has a positive probability of occurring on a ballot.  In particular, $g$ is defined and continuous on $f_R(\tilde\M)$ because of the two-sided hypothesis, so it suffices to prove that $f_R$ is continuous.  For $R\in\{\Terminate, \SARW\}$, the probability of any given ballot $\sigma$ under $f_R(M)$ is a rational function of the entries of $M$.  For $R\in\{\LERW, \KeepFirst\}$,
$$\P(\text{raw ballot's length}>k)\leq(1-\epsilon)^k,$$
where $\epsilon=\min\{T_i(M_0)\}>0$.   For any fixed $k$, the distribution obtained by restricting to raw ballots of length at most $k$ is continuous in $M$.  The bound above is uniform over $\tilde{\mathcal M}$, so these finite approximations converge uniformly to $f_R$, proving continuity.

Any limit point $M^*$ is two-sided: its $\star$ row equals that of $M_0$, and $T_i(M^*)\geq T_i(M_0)>0$ by monotonicity.  Thus, the region $\tilde\M$ contains the orbit and its limit points.

Let $S(M)=\sum_i T_i(M)$ (the sum of the stopping weights), which is non-decreasing along the orbit by Proposition~\ref{P:invariants}(2), hence convergent.  If a limit point $M^*$ had a positive-weight candidate cycle, then $S(\Phi_R(M^*))>S(M^*)$ by Proposition~\ref{P:invariants}(3); taking a subsequence $M_{n_k}\to M^*$ and using continuity, $S(M_{n_k+1})\to S(\Phi_R(M^*))$, which exceeds $\lim S(M_n)=S(M^*)$---a contradiction.  So every limit point is acyclic, and it is fixed by Proposition~\ref{prop:fixed}.

If $R=\Terminate$, then all entries of the succession matrix (not just the stopping weights) are monotonic; the stopping weights are non-decreasing, while it is straightforward to prove that the between-candidate weights are non-increasing.  It follows that the orbit converges to a unique limit.
\end{proof}

To quantify the approach to acyclicity, we use the $\alpha$-cyclicity of Definition~\ref{D:cyclicity}.

\begin{corollary}\label{C:cyclicity}
For any $0<\alpha<1$, the $\alpha$-cyclicity $\cyc(M_n,\alpha)$ converges to $0$.
\end{corollary}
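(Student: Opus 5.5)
The argument rests on the series expansion in Definition~\ref{D:cyclicity}, $\cyc(M,\alpha)=\sum_{k\ge1}\frac{\alpha^k}{k}\operatorname{trace}(\widehat M^k)$, together with Proposition~\ref{prop:limits-acyclic}, which says that every limit point of the orbit is acyclic. The overall plan is a compactness argument. The function $M\mapsto\cyc(M,\alpha)$ is continuous and nonnegative on the set of row-stochastic matrices, and it vanishes on acyclic matrices. The orbit lies in a compact set, and all of its limit points are acyclic. Together these force the cyclicity along the orbit to tend to $0$.

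\emph{Step 1 (continuity and vanishing).} Let $\widehat M$ be the candidate block of a row-stochastic $M$. Its entries are nonnegative and its row sums are at most $1$, so $\operatorname{trace}(\widehat M^k)\le m$ for every $k$. The series is therefore dominated by $m\sum_k\alpha^k/k=-m\log(1-\alpha)<\infty$, uniformly in $M$. Each term is a polynomial in the entries of $M$, so the Weierstrass M-test makes $\cyc(\cdot,\alpha)$ continuous on all of $\M$. Every term is nonnegative. If $M$ is acyclic, then $\widehat M$ is the weighted adjacency matrix of a digraph with no positive-weight closed walk. Each diagonal entry of $\widehat M^k$ is a sum of weights of closed walks of length $k$, so every trace vanishes and $\cyc(M,\alpha)=0$.

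\emph{Step 2 (compactness).} $\M$ is a compact subset of Euclidean space, being a product of simplices. Suppose $\cyc(M_n,\alpha)\not\to0$. Then there are $\varepsilon>0$ and a subsequence along which $\cyc(M_{n_k},\alpha)\ge\varepsilon$. Passing to a further convergent subsequence $M_{n_{k_j}}\to M^*$, continuity gives $\cyc(M^*,\alpha)\ge\varepsilon>0$. But Proposition~\ref{prop:limits-acyclic} says $M^*$ is acyclic, so by Step 1 $\cyc(M^*,\alpha)=0$. This contradiction proves the claim.

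The main point needing care is the uniform domination in Step 1. Continuity can only be extended to the possible limit points if the series converges uniformly, and the bound $\operatorname{trace}(\widehat M^k)\le m$ for substochastic $\widehat M$ supplies this. The strict inequality $\alpha<1$ is what makes this bound summable.
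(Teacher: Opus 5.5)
Your proposal is correct and follows the paper's proof: you take a subsequence with cyclicity bounded below, pass by compactness to a convergent sub-subsequence, and reach a contradiction with the acyclicity of limit points (Proposition~\ref{prop:limits-acyclic}) using continuity of $\cyc(\cdot,\alpha)$. Your Step~1 also supplies the continuity and vanishing facts, via the uniform bound $\operatorname{trace}(\widehat M^k)\le m$, which the paper simply asserts.
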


\begin{proof}
    Suppose to the contrary that $\cyc(M_n,\alpha)\nrightarrow 0$, which means that a subsequence has $\alpha$-cyclicity $>\epsilon>0$.  By compactness, there is a convergent sub-subsequence, whose limit is acyclic according to Proposition~\ref{prop:limits-acyclic}.  Since $\cyc(\cdot ,\alpha)$ is continuous, this contradicts the fact that the $\alpha$-cyclicity of an acyclic matrix is zero.
\end{proof}

Appendix~\ref{sec:iteration-plots} illustrates the drop in cyclicity under iteration on the Scottish data.

\subsection{Acyclicity for \REJ limits}\label{sec:iteration-reject}

When $R=\REJ$, Remark~\ref{R:closure} and Proposition~\ref{P:invariants}(2)-(3) and Proposition~\ref{prop:fixed} remain true; however, Proposition~\ref{prop:limits-acyclic} breaks down for the following reason.  A candidate's first-place votes are not preserved by iteration, and might tend to zero, in which case the limit $M^*$ is not a fixed point of $\Phi_R$ because it is not even in the domain of $\Phi_R$.  Nevertheless, we show next that $M^*$ does have a property that is slightly weaker than acyclic.

\begin{proposition}\label{prop:reject-limits}
If $R=\REJ$, then any limit point of the orbit $(M_n)$ has the property that no candidate that can be reached from $\star$ by a positive-weight path can lie on a positive-weight candidate cycle.
\end{proposition}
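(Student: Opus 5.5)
The plan is to mimic the proof of Proposition~\ref{prop:limits-acyclic}, with the potential $S(M)=\sum_i T_i(M)$. We restrict attention to the part of the limit that is reachable from $\star$. The obstruction is that for \REJ the $\star$ row is not preserved, so a limit point $M^*$ may have $m^*_{\star i}=0$ for some candidates. It may then leave the domain of $\Phi_R$, and we cannot apply $\Phi_R$ to $M^*$ directly.

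\textbf{Step 1: monotone potential.} By Proposition~\ref{P:invariants}(2), which still holds for \REJ, each $T_i(M_n)$ is non-decreasing and bounded by $1$, so $S(M_n)$ converges. Pass to a subsequence $M_{n_k}\to M^*$ by compactness of $\M$.

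\textbf{Step 2: continuity on a reachable submodel.} Let $V\subseteq\C$ be the set of candidates reachable from $\star$ by positive-weight paths in $M^*$. Then $T_i(M^*)\geq T_i(M_0)>0$ for all $i$, so raw walks under matrices near $M^*$ terminate geometrically fast, uniformly. We fix a positive-weight path $\star\to\cdots\to i$ in $M^*$ and the positive cycle through $i$, which stays inside $V$. Restricting to raw ballots of bounded length, the probability of any fixed raw string is a polynomial in the entries. Hence for $n_k$ large:
\begin{itemize}
\item the path-plus-cycle raw string has probability bounded below by some $\delta>0$;
\item the acceptance probability $\P(\text{valid})$ is bounded below, since the bullet path $\star\to j\to\star$ survives along the path.
\end{itemize}

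\textbf{Step 3: quantitative strict increase (main obstacle).} Suppose for contradiction that $i\in V$ lies on a positive candidate cycle of $M^*$. I would redo the strictness argument of Proposition~\ref{P:increase_stop_weights} for \REJ quantitatively. The quantity to control is the conditional probability, given a valid prefix ending at $i$, that the walk takes the next step to $\star$. After conditioning on acceptance it equals
\[
\frac{T_i}{T_i+\sum_{j}m_{ij}a_j},
\]
where $a_j\le 1$ is the acceptance probability of the continuation. The cycle continuation contributes a deficit $m_{ij}(1-a_j)\ge c>0$ uniformly for $M$ near $M^*$.

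The delicate point is averaging over prefixes. $\widetilde T_i$ is a weighted average over valid prefixes ending at $i$, and only prefixes not yet containing the cycle's other vertices give the deficit. This is why reachability matters: the prefix that is just the path $\star\to\cdots\to i$, with cycle vertices off the path chosen WLOG, carries weight bounded below relative to the total mention probability of $i$. That probability is itself bounded, since $\mu_i\le$ the expected length, which is bounded by $1/\epsilon$. The result is a uniform bound $T_i(M_{n_k+1})\ge T_i(M_{n_k})+\eta$ for some $\eta>0$ and all large $k$.

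\textbf{Step 4: contradiction.} Since every other $T_j$ is non-decreasing, we get $S(M_{n_k+1})\geq S(M_{n_k})+\eta$ infinitely often. This contradicts the convergence of $S(M_n)$. So no reachable candidate lies on a positive candidate cycle.

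The main work is in Step 3: bounding the path-prefix weight uniformly from below despite the $\star$ row of $M_n$ drifting. That requires positivity of the path entries in $M^*$, not merely in $M_n$.
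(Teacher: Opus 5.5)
Your proof is correct, but it takes a different route from the paper's.

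\paragraph{The paper's route.} The paper argues at the limit. It deletes the candidates unreachable from $\star$ to obtain $\tilde M^*$, which does lie in the domain of $\Phi_{\REJ}$. It applies the qualitative strictness of Proposition~\ref{P:increase_stop_weights} there to get $T_i(\Phi_{\REJ}(\tilde M^*))>T_i(\tilde M^*)=T_i(M^*)$. It then uses two facts, that transitions out of $\A$ vanish and that raw lengths have a uniform geometric tail, to argue that the \REJ process for $M_{n_k}$ is asymptotically that of $\tilde M^*$. So $T_i(M_{n_k+1})$ would have to converge to two different values.

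\paragraph{Your route.} You never apply $\Phi_{\REJ}$ to a limit object. Instead you prove a uniform gap $T_i(M_{n_k+1})\ge T_i(M_{n_k})+\eta$ on the matrices $M_{n_k}$ themselves. These are all two-sided, so everything is defined there. This replaces the paper's asymptotic-equivalence step by an explicit estimate.

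\paragraph{Step~3 is easier than you make it.} Your remark that only prefixes avoiding the other cycle vertices carry a deficit is not right. Write $w(\pi)$ for the raw probability of a self-avoiding prefix $\pi$ ending at the first visit to $i$. Write $D(\pi)$ for the probability that the continuation from $i$ is rejected. Then exactly $\widetilde T_i=T_i/(1-\bar D)$, where $\bar D$ is the $w$-weighted average of $D(\pi)$. Moreover \emph{every} such prefix has $D(\pi)\ge\kappa$, the weight of the cycle through $i$. This is because following the cycle returns to $i$ itself, which is already on the prefix. Hence $T_i(M_{n+1})\ge T_i(M_n)/(1-\kappa(M_n))$. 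You therefore need no distinguished path prefix and no WLOG choice of $i$. You also need no lower bound on the acceptance probability, which cancels from the conditional probability anyway.

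\paragraph{What this buys.} \REJ always retains bullet votes, so every $M_n$ is two-sided and every candidate is mentioned. The inequality above therefore applies to any candidate on a positive-weight cycle of $M^*$, whether or not it is reachable from $\star$. So your approach, once streamlined, shows that \REJ limit points are fully acyclic. That is stronger than the reachability-restricted property the paper proves. The paper's method inherently needs reachability, since it must place $i$ inside the domain of $\Phi_{\REJ}$ at the limit.
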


\begin{proof}
Let $M^*$ be a limit point, and let $\A$ be the set of candidates who can be reached from $\star$ by a positive-weight path in the succession graph associated to $M^*$.  Let $\tilde{M}^*$ be the matrix obtained from $M^*$ by removing the row and column of each candidate not in $\A$, which is the transition matrix for a Markov chain $\tilde{\Omega}$ on the smaller state space $\A\cup\{\star\}$.

Suppose to the contrary that the proposition is false; that is, there exists a candidate $i\in\A$ who lies on a positive-weight candidate cycle of $\tilde{\Omega}$; in particular, $\tilde{M}^*$ is not acyclic.  As in the previous proof, $T_i(M_n)$ converges monotonically to $T_i(M^*)$.

The proof of Proposition~\ref{P:increase_stop_weights} in the case $R=\REJ$ did not require the full two-sided hypothesis -- this proof shows that $\Phi_R$ strictly increases the stopping weight of any candidate that is contained on a cycle of positive weight and that can be reached from $\star$ by a positive-weight path.  In particular, $T_i(\Phi_R(\tilde{M}^*))>T_i(\tilde{M}^*)$.

On the other hand, the effect on $T_i(M_n)$ of the candidates outside of $\A$ goes to zero.  Indeed, along the convergent subsequence, every transition from $\A$ to its complement tends to zero.  Moreover, the stopping weights are uniformly bounded below by
$\epsilon=\min\{T_j(M_0)\}>0,$ so the raw ballot lengths have a uniform geometric tail.  It follows that the probability that a raw ballot ever leaves $\A$ tends to zero.  Thus the Reject process on $M_n$ is asymptotically equivalent to that on $\tilde M^*$, forcing $T_i(M_n)$ to converge both to $T_i(\tilde M^*)$ and to $T_i(\Phi_R(\tilde M^*))$, a contradiction.
\end{proof}

\subsection{Exact limits for two candidates}\label{app:twocands}

\begin{proof}[Proof of Proposition~\ref{prop:twocand}]
Two-sidedness gives $\alpha,\beta>0$, and $m_{AB}=\frac{c}{1-b}=\frac{c}{\alpha+d}$, $m_{BA}=\frac{d}{\beta+c}$. A walk $\star AB$ yields the ballot $(A,B)$ under either rule, since \Terminate cuts off a return to $A$ and \SARW blocks it; thus one iteration replaces $(a,b,c,d)$ by $(\alpha-c',\beta-d',c',d')$, where
$$c'=\frac{\alpha c}{\alpha+d},\qquad d'=\frac{\beta d}{\beta+c}.$$
Thus $\alpha,\beta$ are preserved and $c,d$ are non-increasing, so they converge to a fixed point $(c^*,d^*)$ of this map, which satisfies $c^*d^*=0$.
Let $\psi=\alpha c-\beta d$. A direct computation gives
$$\psi'=\psi\left(1-\frac{cd}{(\alpha+d)(\beta+c)}\right),$$
so the sign of $\psi$ never changes; since $\varphi=\alpha c'-\beta d'$ is $\psi$ after one step, it has the same sign. Moreover $c-c'\ge cd/(\alpha+\beta)$, so $\sum_n c_nd_n\le(\alpha+\beta)c$, and the product of these factors converges to a positive number. Hence if $\varphi>0$ then $\alpha c^*-\beta d^*>0$, which forces $c^*>0$ and $d^*=0$, with $s=c^*\le c$; the case $\varphi<0$ is symmetric; and if $\varphi=0$ then $\alpha c^*=\beta d^*$, so $c^*=d^*=0$.
\end{proof}

\subsection{Examples of limiting orders}\label{app:3limits}

For Pentland Hills, North Ayrshire, and Renfrewshire, we iterate the map $\Phi_R$ exactly on ballot distributions, which is feasible for $m=7$.
The comparison $\varphi_{ij}$ used in \S\ref{sec:iteration} is motivated as follows. Intuitively, the first-place weights $\alpha_i$ are some of the only measures of candidate support that are invariant under iteration; $m_{ij}$ measures the frequency of the $i\to j$ succession but it can be eroded under iteration, and first-place frequency $\alpha_i$ is a key source of mentions that is stable.  Experiments suggest that the $\varphi$ with an $\alpha_i^2m_{ij}$ term handily outperforms raw $m_{ij}$ and also outperforms the simpler $\alpha_im_{ij}$ at predicting limiting edges on random matrices $M$.

\begin{figure}[htb!]
\centering
\begin{tikzpicture}
\node at (0,4.4){\includegraphics[width=3in]{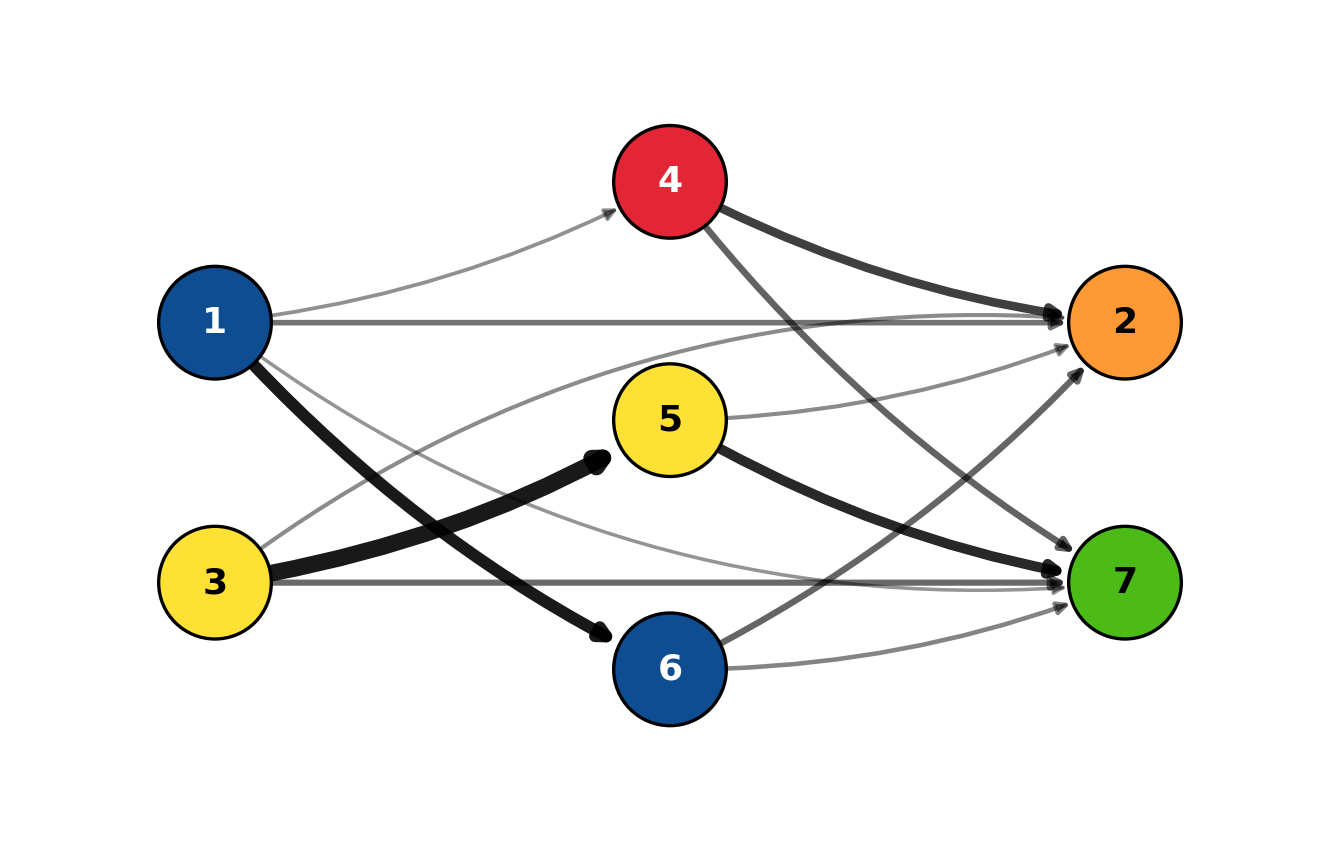}};
\node at (2.6,5.8){\SARW};

\node at (0,0){\includegraphics[width=3in]{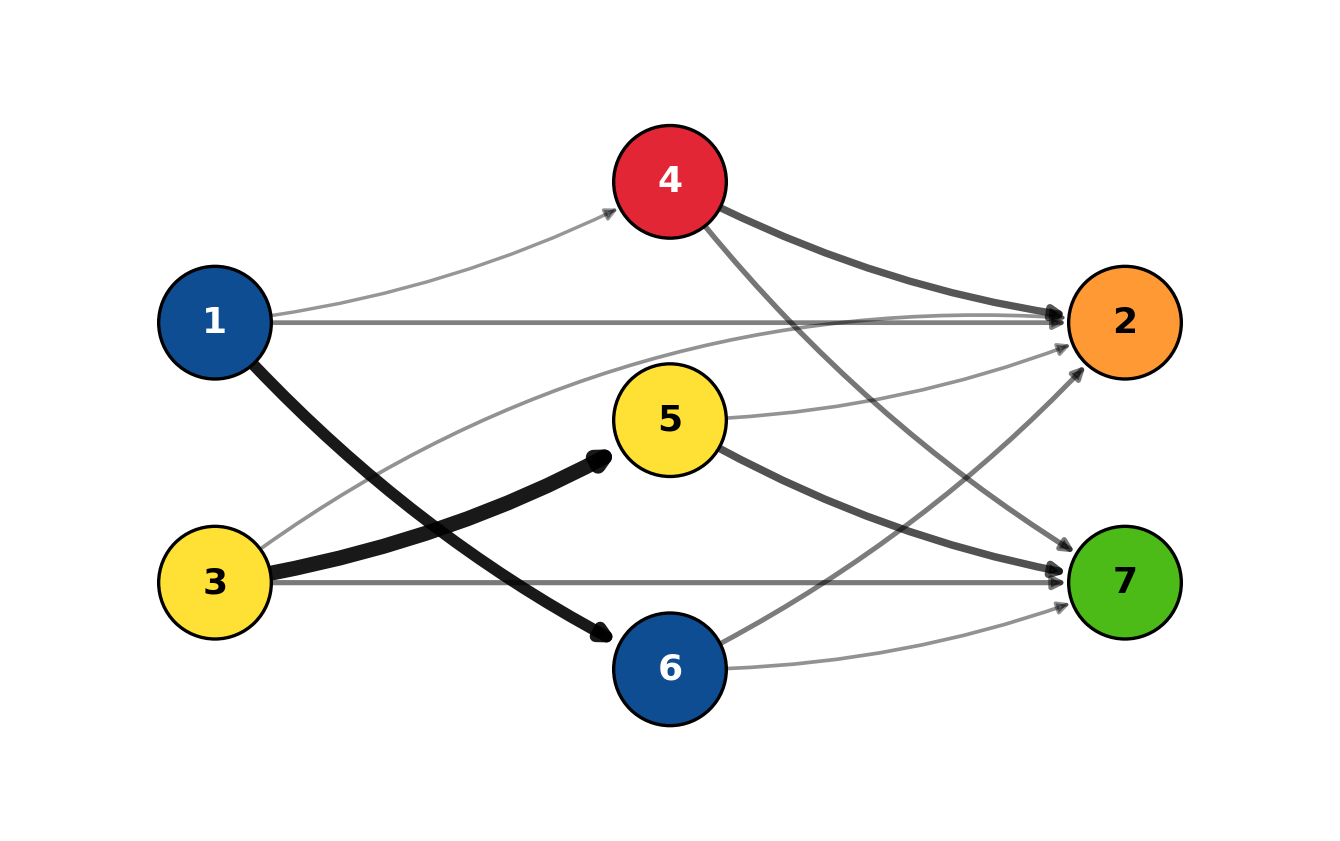}};
\node at (2.6,1.6){\Terminate};
\end{tikzpicture}

\caption{These are exact limits  for iterating the Pentland Hills profile under \SARW and \Terminate, respectively: each iteration uses the full ballot distribution rather than a sampled profile.
Arrows are shown for weights thresholded at $0.025$. The weights of the limiting order vary slightly, but $1\to 7$ is the only edge that is significant in one but not the other convention at this threshold.}\label{fig:limits}
\end{figure}

Figure~\ref{fig:limits} shows the edges of the $R=\Terminate,\SARW$ limits for Pentland Hills with a weight threshold applied for legibility.  At this threshold we see 12 (respectively, 13) edges under the two methods; the true limit in each case has 20 or 21 nonvanishing edges.\footnote{The \Terminate limiting order contains all $\binom 72=21$ edges consistent with the total order $1\succ3\succ4\succ5\succ6\succ2\succ7$, while the \SARW limit (which differs only in how it orders the middle tier) is consistent with $1\succ3\succ6\succ4\succ5\succ2\succ7$.  Of the 21 comparison edges, only the $6\to 5$ edge is ambiguous, carrying a limiting weight of around $8\times 10^{-10}$, near the edge of numerical precision.}   We use the standard colors of each political party in Scotland:  blue for Conservative, yellow for the (ruling, center-left) Scottish National Party, red for Labour, orange for Liberal Democrats, and green for Greens. This allows us to see that the strongest surviving successions follow the sequential order of candidates within parties.  (All ballots list the candidates in the same order  1,2,\dots, which is alphabetical by candidate last name, as prescribed by Scottish election law.)

\FloatBarrier
\section{Empirical convergence}\label{sec:iteration-plots}

Figures~\ref{F:loop1}-\ref{F:loop3} provide evidence consistent with fast convergence to unique acyclic limits with respect to all five methods.  Here the iteration is between a succession matrix and a specific random instance of a profile (rather than a distribution on ballots). 

\begin{figure}[htb!]
    \centering
\includegraphics[width=4in]{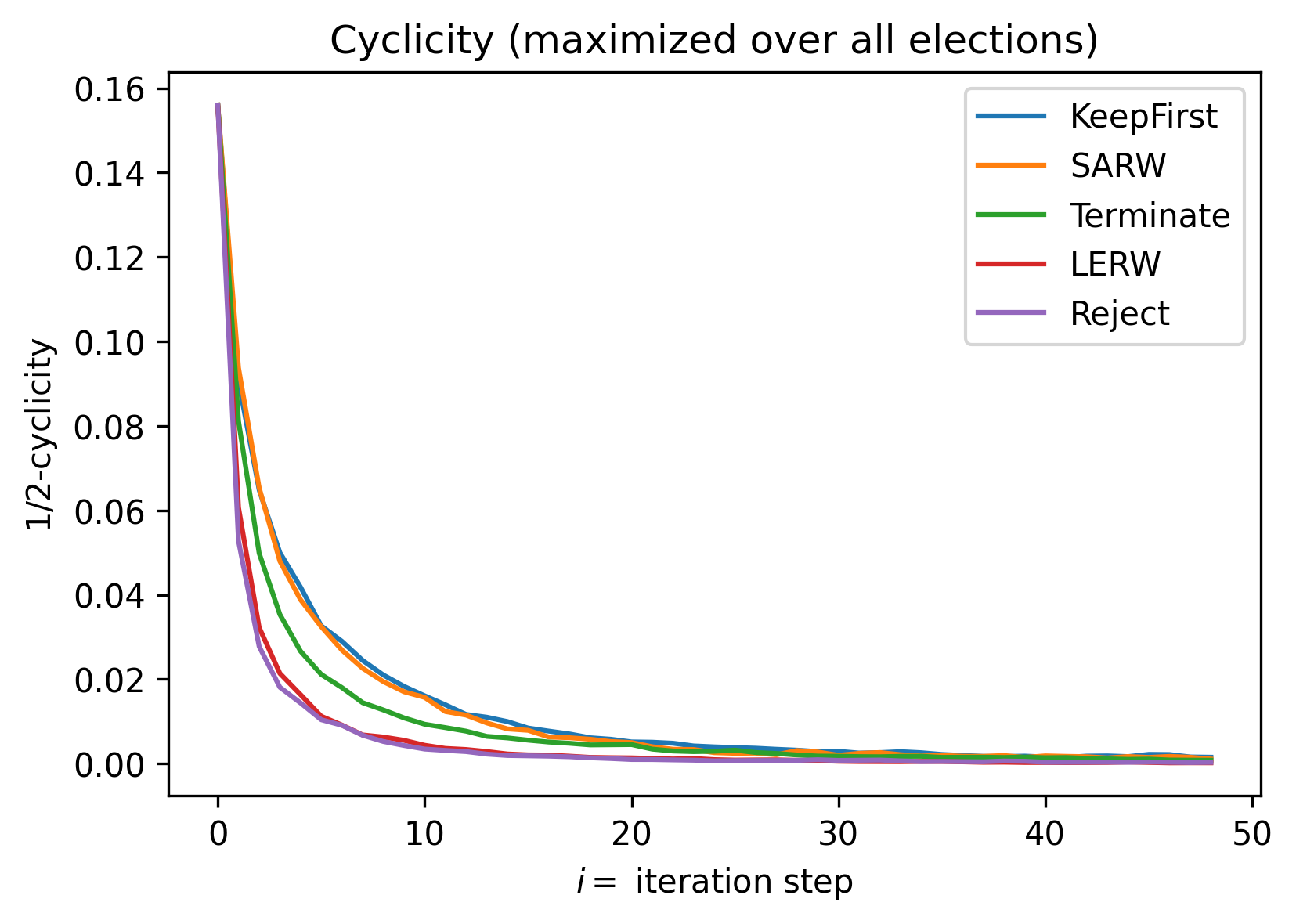}
\caption{The cyclicity plot shows the maximum over all seven-candidate elections of the $\frac{1}{2}$-cyclicity of $i^\text{th}$ iteration.}
    \label{F:loop1}\end{figure}

Figure~\ref{F:loop1} shows that the maximum (over all 282 seven-candidate Scottish elections) of $\cyc(M_n,1/2)$ seems to converge quickly to zero; thus, even in the worst elections, the matrices quickly approach acyclic.

\begin{figure}[htb!]
    \centering
\includegraphics[width=4in]{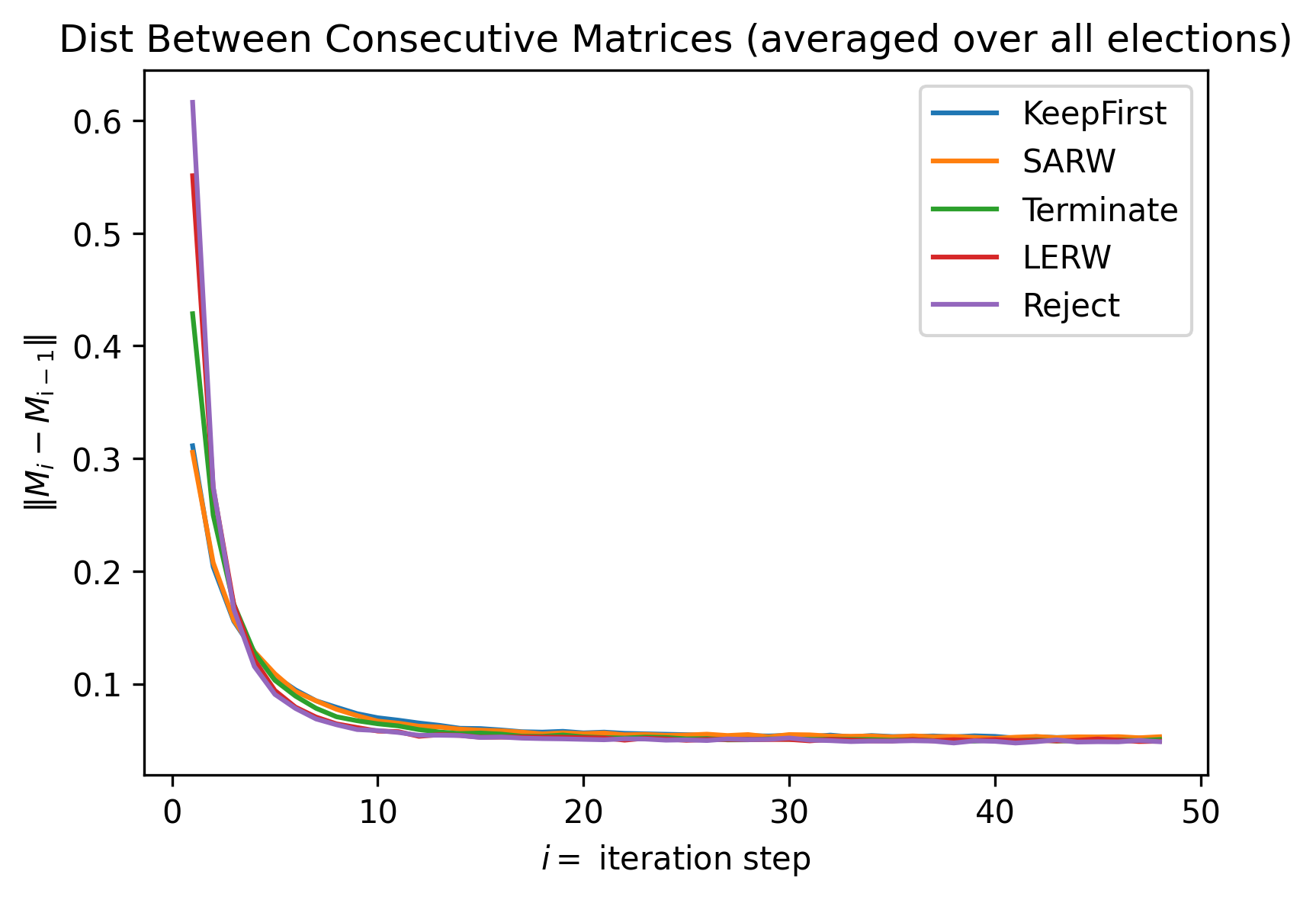}
    \caption{ The distance between consecutive matrices is shown as an average over all elections of $|M_i-M_{i-1}|$.}    \label{F:loop2}
\end{figure}

Figure~\ref{F:loop2}  provides evidence that consecutive iterates become arbitrarily close, consistent with convergence to a \emph{single} limit.  In this plot, we average (rather than maximize) over all elections to reduce noise caused by the randomness of reconstruction, since $M_i$ and $M_{i-1}$ can't be expected to be further apart from each other than two different instances of $M_i$ would be from each other.

\clearpage

\begin{figure}[htb!]
    \centering
\includegraphics[width=4in]{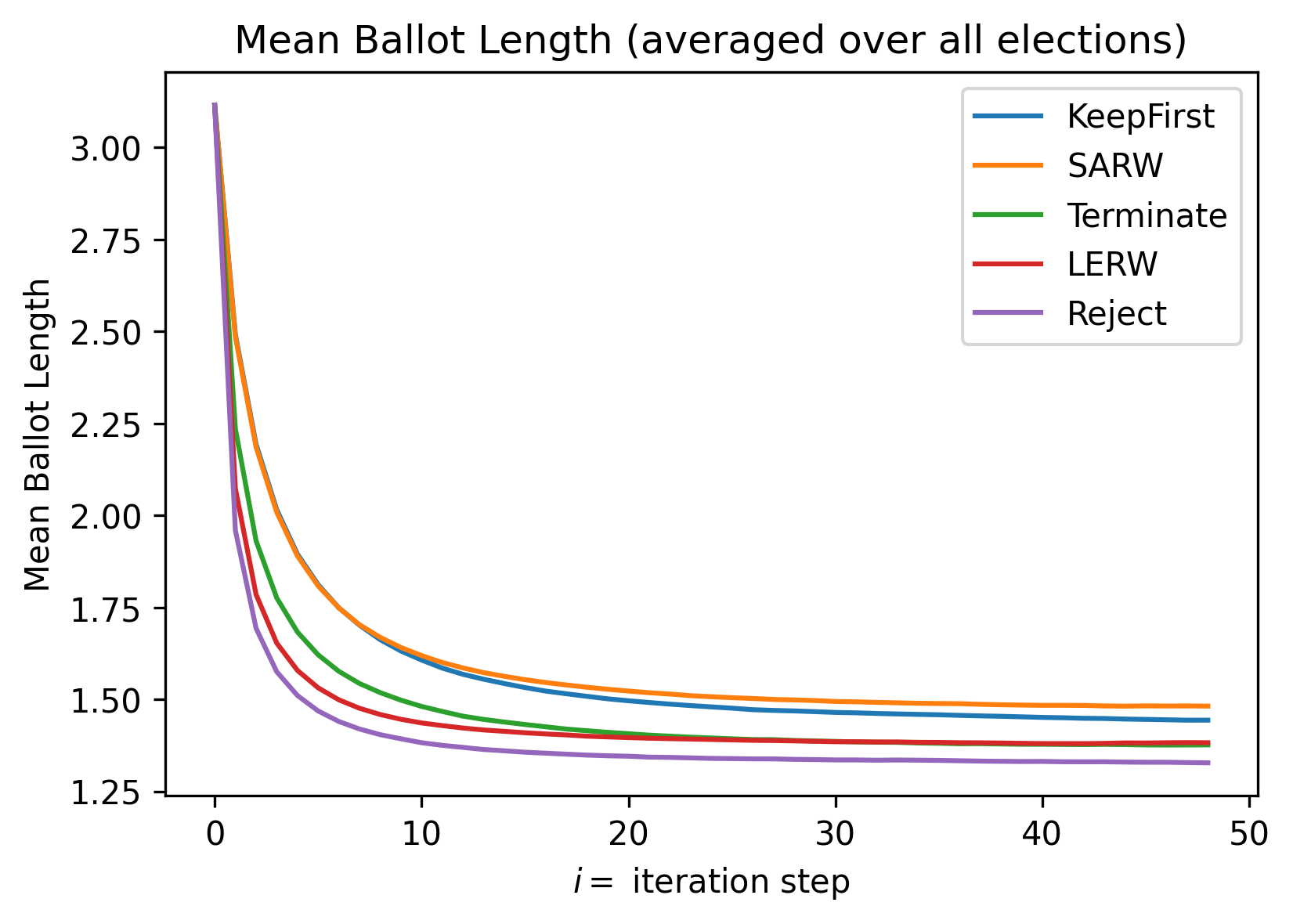}
    \caption{The mean ballot length is also given as an average over all elections.}
    \label{F:loop3}
\end{figure}

Figure~\ref{F:loop3}  shows the average ballot length decreases through the iteration steps towards a limiting value that depends on the method. 

\end{document}